\newif\ifdraft
\newif\ifanon
\newif\iftoc
\renewcommand\subsubsection{\@startsection{subsubsection}{3}{\z@}%
                       {-18\p@ \@plus -4\p@ \@minus -4\p@}%
                       {0.5em \@plus 0.22em \@minus 0.1em}%
                       {\normalfont\normalsize\bfseries\boldmath}}
\spnewtheorem{construction}{Construction}{\bfseries}{\itshape}
\Crefname{construction}{Construction}{Constructions}
  \newcommand{\authnote}[3]{{\color{#3} {\bf #1:} #2}}
  \newcommand{\authnote}[3]{}
\DeclareMathAlphabet{\mathpzc}{OT1}{pzc}{m}{it}
\newtheorem{fact}{Fact}
\let\c@remark\relax
\newcounter{game}%
\newcounter{algorithm saved}%
\newenvironment{game}[1][htb]{%
    \renewcommand{\ALG@name}{Game}%
    \setcounter{algorithm saved}{\value{algorithm}} %
    \setcounter{algorithm}{\value{game}}%
    \begin{algorithm}[#1]%
    }{\end{algorithm}
    \setcounter{game}{\value{algorithm}}%
    \setcounter{algorithm}{\value{algorithm saved}}%
}
\crefname{algorithm}{Game}{Games}
\newcommand{\size}[1]{\left|#1\right|}
\newcommand{\dk}{\ensuremath{\mathsf{dk}}\xspace}
\newcommand{\gen}{\ensuremath{\mathsf{Gen}}\xspace}
\newcommand{\qpk}{\ensuremath{\mathpzc{qpk}}\xspace}
\newcommand{\qpkgen}{\ensuremath{\mathpzc{QPKGen}}\xspace}
\newcommand{\qenc}{\ensuremath{\mathpzc{Enc}}\xspace}
\newcommand{\qdec}{\ensuremath{\mathpzc{Dec}}\xspace}
\newcommand{\destruct}{\ensuremath{\mathpzc{Destruct}}\xspace}
\newcommand{\qc}{\ensuremath{\mathpzc{qc}} \xspace}
\newcommand{\PRFSPD}{\textsf{PRFSPD}}
\newcommand{\ver}{\ensuremath{\mathpzc{Ver}}\xspace}
\newcommand{\secproof}{\ensuremath{\mathsf{Unclonability\text{-}of\text{-}proofs}}\xspace}
\newcommand{\experiment}[2]{\ensuremath{\mathsf{Exp}^{#1,#2}_{\secpar}}\xspace}
\newcommand{\clon}[2]{\ensuremath{\mathsf{Cloning}\text{-}\experiment{#1}{#2}}\xspace}
\newcommand{\haar}{\ensuremath{\mathpzc{Haar}}\xspace}
\newcommand{\uniform}[1]{\ensuremath{\xleftarrow{u}\{0,1\}^{#1}}\xspace}
\newcommand{\bra}[1]{\langle #1 \vert}
\newcommand{\ket}[1]{\vert #1 \rangle}
\newcommand{\ketbra}[1]{\vert #1 \rangle \langle #1 \vert}
\newcommand{\braket}[2]{\langle #1 \vert #2 \rangle}
\newcommand{\tensor}{\otimes}
\newcommand{\keygen}{\ensuremath{\mathsf{key\text{-}gen}}}
\newcommand{\qpt}{\textsf{QPT}}
\newcommand{\ct}{\textsf{ct}}
\newcommand{\pt}{\textsf{pt}}
\newcommand{\Hilb}{\mathcal{H}}
\renewcommand{\CC}{\mathbb{C}}
\newcommand{\Gen}{\ensuremath{\mathpzc{Gen}}}
\newcommand{\PRF}{\textsf{PRF}}
\DeclareMathOperator{\trace}{Tr}
\newcommand{\ot}{\otimes}
\newcommand{\ident}{\mathcal{I}}
\renewcommand{\secpar}{\lambda}
\renewcommand{\secparam}{1^\secpar}
\newcommand{\ooracle}{\mathcal{O}}
\newcommand{\SE}{\mathsf{SE}}
\newcommand{\oenc}{\mathsf{OEnc}}
\newcommand{\odec}{\mathsf{ODec}} %
\newcounter{mycomment}
\title{Public-Key Encryption with Quantum Keys}
  \author{}
  \institute{}
  \date{}
    \author{Khashayar Barooti\inst{1}
      \and Alex B. Grilo\inst{2}
      \and Loïs Huguenin-Dumittan\inst{1}
      \and Giulio~Malavolta\inst{3}
      \and Or~Sattath\inst{4}
      \and Quoc-Huy~Vu\inst{2}
      \and Michael Walter\inst{5}}
    \institute{EPFL, Lausanne, Switzerland \and
      Sorbonne Universit\'e, CNRS, LIP6, France \and
      Max-Planck Institute in Security and Privacy, Bochum, Germany \and
      Computer Science Department, Ben-Gurion University of the Negev, Israel \and
      Faculty of Computer Science, Ruhr University Bochum, Germany}
    \authorrunning{K. Barooti et al.}
\begin{document}
\maketitle
\begin{abstract}
  In the framework of Impagliazzo's five worlds, a distinction is often made
  between two worlds, one where public-key encryption exists (Cryptomania), 
  and one in which only one-way functions exist (MiniCrypt).
  However, the boundaries between these worlds can change when quantum
  information is taken into account.
  Recent work has shown that quantum variants of oblivious transfer and
  multi-party computation, both primitives that are classically in Cryptomania,
  can be constructed from one-way functions, placing them in the realm of
  quantum MiniCrypt (the so-called MiniQCrypt).
  This naturally raises the following question: \emph{Is it possible to
    construct a quantum variant of public-key encryption, which is at the heart
    of Cryptomania, from one-way functions or potentially weaker assumptions?}

  In this work, we initiate the formal study of the notion of quantum public-key
  encryption (qPKE), i.e., public-key encryption where keys are allowed to be quantum states.
  We propose new definitions of security and several constructions of qPKE based
  on the existence of one-way functions (OWF), or even weaker assumptions, such as
 pseudorandom function-like states
  (PRFS) and pseudorandom function-like states
  with proof of destruction (PRFSPD).
  Finally, to give a tight characterization of this primitive, we show that
  computational assumptions are necessary to build quantum public-key
  encryption. That is, we give a self-contained proof that no quantum public-key encryption scheme can provide
  information-theoretic security.
\end{abstract}

\iftoc
    \tableofcontents 
\fi

\section{Introduction} %
\label{sec:introduction}

The use of quantum resources to enable cryptographic tasks under weaker
assumptions than classically needed (or even {\em unconditionally}) were
actually the first concrete proposals of quantum computing, with the seminal
quantum money protocol of Wiesner~\cite{Wie83} and the key-exchange protocol of
Bennett and Brassard~\cite{C:BenBra84}.  Ever since, the field of quantum cryptography has seen a surge of primitives that leverage quantum 
information to perform tasks that classically require stronger assumptions, or are 
downright impossible. Recent works~\cite{C:BCKM21b,EC:GLSV21} have shown that there exist quantum protocols for oblivious
transfer, and therefore arbitrary multi-party computation (MPC), based solely on the
existence of one-way functions (OWF)~\cite{C:BCKM21b,EC:GLSV21}, or pseudorandom states (PRS)~\cite{C:JiLiuSon18},
which potentially entail even weaker computational assumptions~\cite{Kre21,kretschmer2022quantum}.
It is well-known that, classically, oblivious transfer and MPC are ``Cryptomania'' objects, i.e.,
they can only be constructed from more structured assumptions that imply public-key encryption
(PKE). Thus, the above results seem to challenge the boundary between Cryptomania and MiniCrypt,
in the presence of quantum information. Motivated by this state of affairs, in this work we
investigate the notion of \emph{PKE itself}, the heart of Cryptomania, through the lenses of quantum computing. That is, we ask the following question:
\begin{quote}\centering
\emph{Does public-key encryption (PKE) belong to MiniQCrypt?}
\end{quote}
Known results around this question are mostly negative: It is known that PKE cannot be constructed in a black-box manner from OWFs~\cite{C:ImpRud88}, and this result has been recently re-proven in the more challenging setting where 
the encryption or decryption algorithms are quantum~\cite{C:ACCFLM22}. However, a tantalizing possibility left open by these works is to realize PKE schemes from OWFs (or weaker assumptions), where public-key or ciphertexts are quantum states.

\subsection{Our results}

In this work we initiate the systematic study of quantum public-key encryption (qPKE), i.e., public-key encryption where public-keys and ciphertexts are allowed to be quantum states. We break down our contributions as follows.

\paragraph{1.~Definitions.}
We provide a general definitional framwork for qPKE, where both the public-key and
ciphertext might be general quantum states.
In the classical setting, there is no need to provide oracle access to the encryption, since the public-key can be used to implement that. In contrast, if the public-key is a quantum state, it might be
measured during the encryption procedure, and the ciphertexts might depend on the measurement outcome. In fact, this is the approach taken in some of our constructions. This motivates a stronger security definition, similar to the classical counterpart, in which the adversary gets additional access to an encryption oracle that uses the same quantum public-key that is used during the challenge phase. 
We define IND-CPA-EO (respectively, IND-CCA-EO) security by adding the encryption oracle (EO) to the standard IND-CPA (respectively, IND-CCA) security game.

\paragraph{2.~Constructions.}
With our new security definition at hand, we propose three protocols for
implementing qPKE from OWF and potentially weaker assumptions, each with its own different advantages and
disadvantages.
More concretely, we show the existence of:
\begin{enumerate}
  \item 
    A qPKE scheme
    with quantum public-keys and classical ciphertexts that is IND-CCA-EO\footnote{Throughout this paper, unless explicitly specified, by IND-CCA we refer to the notion of adaptive IND-CCA2 security.} secure, based on post-quantum OWF, in \cref{sec:cca_qpke}.

  \item
    A qPKE scheme with quantum public-key and quantum ciphertext that is IND-CCA1 secure, based on pseudo-random function-like states (PRFS) with super-logarithmic input-size\footnote{Note that PRS implies PRFS with
    logarithmic size inputs, but no such implication is known for super-logarithmic inputs.}, in \cref{sect:cca_from_prfs}. Since this scheme is not EO secure, each quantum public-key enables the encryption of a single message.
      \item %
    A qPKE scheme with quantum public-key and
    classical ciphertext that is IND-CPA-EO secure based on pseudo-random function-like states with proof of destruction (PRFSPDs), in \cref{sect:qpke-from-prfspd}. 
\end{enumerate}
We wish to remark that it has been recently shown that OWF imply PRFS with
super-logarithmic input-size~\cite{C:AnaQiaYue22} and PRFSPDs~\cite{cryptoeprint:2023/543}.
Therefore, the security of the second and third protocols is based on a
potentially weaker cryptographic assumption than the first one.
Furthermore, PRFS with super-logarithmic input-size are \emph{oracle separated} from
one-way functions~\cite{Kre21}; therefore, our second result shows a black-box
separation between a certain form of quantum public-key encryption and one-way
functions.
On the other hand, for the other two constructions, even if
the public-key is a quantum state, the ciphertexts are classical and, furthermore, one quantum
public-key can be used to encrypt multiple messages.
The first protocol is much simpler to describe and understand since
it only uses standard (classical) cryptographic objects.
Moreover, we show that this scheme guarantees the notion of adaptive CCA2
security and is the only scheme that achieves perfect correctness.

\paragraph{3.~Lower Bounds.} To complete the picture, we demonstrate that \emph{information-theoretically secure} qPKE does not exist. Due to the public-keys being quantum states, this implication is much less obvious than for the classical case. In fact, some of the existing constructions of qPKE~\cite{GottesmanConstruction} have been conjectured to be unconditionally secure, a conjecture that we invalidate in this work. While this general statement follows by known implications in the literature (see~\cref{sec:impossibility} for more details), in this work we present a self-contained proof of this fact, borrowing techniques from shadow tomography, which we consider to be of independent interest.

\subsection{Technical overview}
\label{sec:technical_overview}
In this section, we provide a technical overview of our results. In  \Cref{sec:intro-definition}, we explain the challenges and choices to define qPKE and its security definition. In \Cref{sec:intro-constructions}, we present 3 instantiations of qPKE, each based on a different assumption and with different security guarantees. Ultimately, \Cref{sec:impossibility_overview} is dedicated to the impossibility of information-theoretically secure qPKE and a high-level overview of the proof technique. 

\subsubsection{Definitions of qPKE}
\label{sec:intro-definition}
In order to consider public-key encryption schemes with quantum public-keys, we
need to revisit the traditional security definitions%
.
In the case of quantum public-keys, there are several immediate issues that require revision. 

The first issue is related to the access the adversary is given to the public-key. In the classical-key case (even with quantum ciphertexts), the adversary is
given the classical public-key $\pk$. Given a single quantum public-key,  one cannot create arbitrary number of copies of the quantum public-key, due to no-cloning. Hence, to naturally extend notions such as IND-CPA security, we provide multiple copies of the quantum public-key to the adversary (via the mean of oracle access to the quantum public-key generation algorithm).

The second issue concerns the quantum public-key's \emph{reusability}. Classically, one can use the public-key to encrypt multiple messages. With quantum public-keys, this might not be the case: the quantum public-key might be consumed during the encryption. 
In a non-reusable scheme, the user needs a fresh quantum public-key for every plaintext they wish to encrypt. This is not only a theoretical concern: in the PRFS-based construction (see \cref{sect:cca_from_prfs}), part of the quantum public-key is sent as the (quantum) ciphertext, so clearly, this construction is \emph{not} reusable. %

Thirdly, it could be the case that in a reusable scheme, each encryption call changes the public-key state $\rho_\qpk$ in an irreversible way. %
Hence, we make a syntactic change: $\enc(\rho_\qpk, m)$ outputs
$(\rho_\qpk',c)$, where $c$ is used as the ciphertext and $\rho_\qpk'$ is used as the
key to encrypt the next message. Note that in this scenario the updated public-key is not publicly available anymore and is only held by the party who performed the encryption.

Lastly, the syntactic change mentioned above also has security effects.
Recall that classically, there is no need to give the adversary access to an encryption oracle, since the adversary can generate encryption on their own. Alas, with quantum public-keys, the distribution of ciphers might depend on the changes that were made to the quantum public-key by the challenger whenever the key is used to encrypt several messages. Therefore, for reusable schemes, we define two new security notions, denoted CPA-EO and CCA-EO, that are similar to CPA and CCA but where the adversary is given access to an encryption oracle (EO). 
We note there are several works considering the notions of chosen-ciphertext security in the quantum setting, because it is not clear how to prevent the adversary from querying the challenge ciphertext, if it contains a quantum states. However, we only consider CCA-security for schemes with classical ciphertexts, and therefore this issue does not appear in this work.

\paragraph{Pure vs Mixed States.} We mention explicitly that we require our public-keys to be \emph{pure states}. This is motivated by the following concern: there is no general method to authenticate quantum states. One proposal to ensure that the certificate authority (CA) is sending the correct state is to distribute various copies of the keys to different CAs and test whether they are all sending the same state~\cite{GottesmanConstruction}. This ensures that, as long as at least one CA is honest, the user will reject a malformed key with some constant probability. However, this argument crucially relies on the public-key being a pure state (in which case comparison can be implemented with a SWAP-test). On the other hand, if the public-key was a mixed state, there would be no way to run the above test without false positives.

We also mention that, if mixed states are allowed, then there is a trivial
construction of qPKE from any given symmetric encryption scheme
$(\mathsf{SKE}.\keygen, \allowbreak \mathsf{SKE}.\enc,\mathsf{SKE}.\dec)$, as
also observed in~\cite[Theorem C.6]{cryptoeprint:2022/1336}, which we describe in the
following.
To generate the keys, we use the output of $\mathsf{SKE}.\keygen$ as the
secret-key and use it to create the uniform mixture
\begin{equation}
\frac{1}{2^n} \sum_{x\in\{0,1\}^n}\ketbra{x}\tensor \ketbra{\enc_{sk}(x)}    
\label{eq:pk_for_encryption_with_mixture}
\end{equation}
as the public-key.
The ciphertext corresponding to a message $m$ is given by $(\enc_x(m),\enc_{sk}(x))$.
To decrypt, the decryptor would first recover $x$ by decrypting the second
element in the ciphertext using $sk$, and then recover $m$ by decrypting the
first item using $x$ as the secret key.

\subsubsection{Constructions for qPKE}
\label{sec:intro-constructions}
As previously mentioned, we propose in this work three schemes for qPKE, based on three different assumptions, each providing a different security guarantee. 

\paragraph{qPKE from OWF.} 
Our first instantiation of qPKE is based on the existence of post-quantum OWFs.
For this construction, we aim for the strong security notion of indistinguishability against adaptive chosen ciphertext attacks with encryption oracle referred to as IND-CCA-EO. We start with a simple bit-encryption construction that provides IND-CCA security and we discuss how one can modify the scheme to encrypt multi-bit messages and also provide EO security. 

Our first scheme assumes the existence of a \emph{quantum-secure pseudorandom
  function~(PRF)}, which can be built from quantum-secure one-way
functions~\cite{FOCS:Zhandry12}.
Given a PRF ensemble $\{f_{k}\}_k$, the public key consists of a pair of pure
quantum states $\qpk=(\ket{\qpk_0}, \ket{\qpk_1})$ and the secret key consists
of a pair of bit-strings $\dk=(\dk_0,\dk_1)$ such that, for all~$b\in\{0,1\}$,
\[ \ket{\qpk_b} = \frac1{\sqrt{2^n}} \sum_{x\in\{0,1\}^n} \ket{x,f_{\dk_b}(x)}, \]
where $f_k$ denotes the quantum-secure PRF keyed by~$k$.
To encrypt a bit~$b$, one simply measures all qubits of $\ket{\qpk_b}$ in the
computational basis.
The result takes the form $(x,f_{\dk_b}(x))$ for some uniformly
random~$x\in\{0,1\}^n$ and this is returned as the ciphertext, i.e.,
$(\qc_0,\qc_1) = (x,f_{\dk_b}(x))$.

To decrypt a ciphertext $(\qc_0,\qc_1)$, we apply both~$f_{\dk_0}$ and~$f_{\dk_1}$
to~$\qc_0$ and return the value of~$b\in\{0,1\}$ such
that~$f_{\dk_b}(\qc_0) = \qc_1$.
In case this does happen for neither or both of the keys, the decryption aborts. %

The IND-CCA security of the simple bit-encryption scheme can be proven with a hybrid argument~(see~\cref{app:CCA_bit_encryption}). However, there are a few caveats to the scheme that can be pointed out. First, the scheme is not reusable. It can be easily noticed that after using a public-key for an encryption, the public-key state collapses, meaning that all the subsequent encryption calls are derandomized. This would mean if the same public-key is reused, it can not even guarantee IND-CPA security as the encryption is deterministic.

The second issue is lifting this CCA-secure bit-encryption scheme to a many-bit CCA-secure encryption scheme. Note that although not trivial, as proven by Myers and Shelat~\cite{FOCS:MyeShe09}, classically it is possible to construct CCA-secure many-bit encryption from CCA-secure bit-encryption. However, the argument cannot be extended to qPKE in a generic way.
The main issue is that the construction from~\cite{FOCS:MyeShe09}, similar to
the Fujisaki-Okamoto transform, derandomizes the encryption procedure for some fixed random coins. %
Later these fixed random coins are encrypted and attached to the ciphertext, so that
the decryptor can re-encrypt the plaintext to make sure they were handed the
correct randomness.
Looking at our construction, it is quite clear that it is not possible to
derandomize the encryption procedure as the randomness is a consequence of the
measurement.

Let us show how the same approach can be modified to circumvent the issues mentioned. Our main observation is that we can use public-keys of the form mentioned before for a key agreement stage and then use the agreed key to encrypt many-bit messages with a symmetric-key encryption scheme (SKE). Let us elaborate. 
Let $\{f_{k}\}_k$ be a PRF family and $(\mathsf{SE}.\enc,\mathsf{SE}.\dec)$ be a symmetric-key encryption scheme. Note that quantum-secure one-way functions imply a quantum-secure PRF~\cite{FOCS:Zhandry12}, and post-quantum IND-CCA symmetric encryption~\cite{EC:BonZha13}\footnote{IND-CCA SKE can be built from an IND-CPA SKE and a MAC using the encrypt-then-MAC paradigm.}. %
Consider the following scheme: the secret key $\dk$ is a uniformly random key for the PRF, and for a fixed $\dk$,
the quantum public-key state is
\begin{align}\label{eq:public-key-owf}
\ket{\qpk_\dk}=\frac{1}{\sqrt{2^\secpar}} \sum_{x \in
  \{0,1\}^\secpar}\ket{x}\ket{f_\dk(x)}.
\end{align}
The encryption algorithm will then measure $\ket{\qpk_\dk}$ in the computational
basis leading to the outcome $(x^*, y^*=f_{\dk}(x^*))$. The ciphertext of a message
$m$ is given by $(x^*,\mathsf{SE}.\enc(y^*,m))$.
To decrypt a ciphertext $(\hat{x},\hat{c})$, we first compute $\hat{y} = f_{\dk}(x)$ and return $\hat{m} = \mathsf{SE}.\dec(f_{\dk}(\hat{x}),\hat{c})$.

We emphasize that this scheme is reusable since it allows the encryption %
 of many messages using the same
measurement outcome $(x^*, f_{\dk}(x^*))$.
Using a hybrid argument, it can be shown that if the underlying SKE guarantees IND-CCA security, this construction fulfills our strongest security notion, i.e. IND-CCA-EO security. A formal description of the scheme, along with a security proof can be found in \cref{sec:cca_qpke}.

\paragraph{QPKE from PRFS.} The second construction we present in this paper is an IND-CCA1 secure public-key scheme based on the existence of pseudorandom function-like state generators. Our approach is based on first showing bit-encryption, and the discussion regarding how to lift that restriction is discussed in \cref{sect:cca_from_prfs}. The ciphertexts generated by our scheme are quantum states, and as the public-keys of this construction are not reusable, we do not consider the notion of EO security. A family of states $\{\ket{\psi_{k,x}}\}_{k,x}$ is pseudo-random
function-like~\cite{C:AnaQiaYue22} if
\begin{enumerate}
\item There is a quantum polynomial-time algorithm $\gen$ such that
\[\gen(k,\sum_{x} \alpha_x \ket{x}) = \sum_{x} \alpha_x \ket{x}\ket{\psi_{k,x}} \text{, and}\]
  \item No \(\qpt\) adversary can distinguish $(\ket{\psi_1},...,\ket{\psi_\ell})$ from
    $(\ket{\phi_1},...,\ket{\phi_\ell})$, where \allowbreak
    $\ket{\psi_i} = \sum_{x} \alpha^i_x \ket{x}\ket{\psi_{k,x}}$,
    $\ket{\phi_i} = \sum_{x} \alpha^i_x \ket{x}\ket{\phi_{x}}$, and $\{\ket{\phi_x}\}_x$ are Haar
    random states and the states $\ket{\sigma_i} = \sum_{x} \alpha^i_x \ket{x}$ are chosen by
    the adversary.
\end{enumerate}
We continue by providing a high-level description of the scheme. The key generation algorithm picks a uniform PRFS key $\dk$ and generates the corresponding public-keys as stated below: 
\begin{align}\label{eq:public-key-prfs}
  \frac{1}{\sqrt{2^\secpar}} \sum_{x \in \{0,1\}^\secpar}\ket{x}\ket{\psi_{\dk,x}}^{\ot n},
\end{align}
where $\{\ket{\psi_{k,x}}\}_{k,x}$ is a PRFS family, the size of the input $x$ is super-logarithmic in the security parameter and $n$ is a polynomial in the security parameter. 

To encrypt a bit $m$, the encryptor will then measure the first register of $\ket{\qpk}$ to obtain $x^*$ and the 
residual state after this measurement will be of form $\ket{x^*}\ket{\psi_{\dk,x^*}}^{\ot n}$. They also sample a uniform key $\dk_1$ and compute the state $\ket{\psi_{\dk_1,x^*}}$ %
then compute the ciphertext $c = (x^*,\rho)$ where
\begin{align}
\rho =  \begin{cases} \ket{\psi_{\dk,x^*}}^{\ot n}, & \text{if } m = 0 \\ \ket{\psi_{\dk_1,x^*}}^{\ot n},& \text{if } m = 1\end{cases}.
\end{align}
To decrypt a ciphertext $(\hat{x}, \hat{\rho})$, we first compute $n$ copies of the state $\ket{\psi_{\dk ,\hat{x}}}$ and performs a SWAP tests between each copy and the subsystems of $\hat{\rho}$ with the same size as $\ket{\psi_{\dk,\hat{x}}}$. If all the SWAP tests return $0$ the decryption algorithm returns $\hat{m} = 0$ and otherwise it returns $\hat{m} = 1$.  For a large enough $n$, our scheme achieves statistical correctness.

We prove that this construction guarantees IND-CCA1 security by a hybrid argument in \cref{sect:cca_from_prfs}. We emphasize that as the ciphertexts of the scheme are quantum states it is challenging to define adaptive CCA2 security.

\paragraph{QPKE from PRFSPDs.}
Our third scheme is based on pseudo-random function-like states with proof of
destruction (PRFSPDs), which was recently defined
in~\cite{cryptoeprint:2023/543}.
The authors extended the notion of PRFS to pseudo-random
function-like states with proof of destruction, where we have two algorithms
$\destruct$ and $\ver$, which allows us to verify if a copy of the PRFS was
destructed.

We will discuss now how to provide non-reusable CPA security security\footnote{Meaning that
one can only encrypt once using a $\ket{\qpk}$.} of the encryption of a one-bit
message and we discuss later how to use it to achieve reusable security, i.e., CPA-EO security.
The quantum public-key in this simplified case is
\begin{align}\label{eq:public-key-prfspd}
  \frac{1}{\sqrt{2^\secpar}} \sum_{x \in \{0,1\}^\secpar}\ket{x}\ket{\psi_{\dk,x}}.
\end{align}
The encryptor will then measure the first register of $\ket{\qpk}$ and the
post-measurement state is $\ket{x^*}\ket{\psi_{\dk,x^*}}$. The encryptor will
then generate a (classical) proof of destruction $\pi = \destruct(\ket{\psi_{\dk,x^*}})$. The
encryption procedure also picks $\dk_1$ uniformly at random, generated $\ket{\psi_{\dk_1,x^*}}$ and generates the proof of destruction $\pi'=\destruct(\ket{\psi_{\dk_1,x^*}})$. 
The corresponding ciphertext for a bit $b$ is given by $c = (x^*,y)$, where \[y =  \begin{cases}\pi', & \text{if } b = 0 \\ \pi,& \text{if
} b = 1\end{cases}.\]
The decryptor will receive some value $(\hat{x},\hat{y})$ and
decrypt the message $\hat{b} =  \ver(\dk,\hat{x}, \hat{y})$.
The proof of the security of the aforementioned construction follows from a hybrid argument reminiscent of the security proof of the previous schemes~(see \cref{sect:qpke-from-prfspd}).
Notice that repeating such a process in parallel trivially gives a one-shot
security of the encryption of a string $m$ and moreover, such an encryption is
classical. Therefore, in order to achieve IND-CPA-EO secure qPKE scheme, we can actually
encrypt a secret key $\sk$ that is chosen by the encryptor, and send the message
encrypted under $\sk$. We leave the details of such a construction and its proof
of security to~\cref{sect:qpke-from-prfspd}.
 
\subsubsection{Impossibility of Information-Theoretically Secure
  qPKE}\label{sec:impossibility_overview}
So far, we have established that qPKE can be built from assumptions weaker than the ones required for the classical counterpart, and potentially even weaker than those needed to build secret-key encryption classically. %
This naturally leads to the question of whether it is possible to build an information-theoretically secure qPKE. In the following, we present a self-contained proof of this fact, using techniques from the literature on shadow tomography.
Although proving the impossibility for classical PKE is immediate, there are a few challenges when trying to prove a result of a similar flavor for qPKE. Even when considering security against a computationally unbounded adversary, there is a limitation that such adversary has, namely, they are only provided with polynomially many copies of the public-key.

The first step of the proof is reducing winning the IND-CPA game to finding a secret-key/public-key pair $(\dk,\ket{\qpk_{\dk}})$ such that 
\[
\langle\qpk^* \ket{\qpk_{\dk}} \approx 1.
\]
In other words, we show that if $\ket{\qpk_{\dk}}$ is relatively close to $\ket{\qpk^*}$, there is a good chance that $\dk$ can decrypt ciphertexts encrypted by $\ket{\qpk^*}$ correctly. A formal statement and the proof of this argument can be found in \cref{lemma:pk_distance}. 

Given this lemma, the second part of the proof consists in constructing an adversary that takes polynomially many copies of $\ket{\qpk^*}$ as input and outputs $(\dk,\ket{\qpk_{\dk}})$ such that $\ket{\qpk_{\dk}}$ is relatively close to $\ket{\qpk^*}$. The technique to realize this adversary is \emph{shadow tomography}, which shows procedures to estimate 
the values $\langle \qpk_{\dk} \vert \qpk^*\rangle$ for all $(\ket{\qpk_{\dk}} , \dk)$ pairs. Note that doing this naively, i.e. by SWAP-testing multiple copies of $\ket{\qpk^*}$ with each $\ket{\qpk_{\dk}}$, would require exponentially many copies of the public-key $\ket{\qpk^*}$. The way we circumvent this problem is by using the a recent result by Huang, Kueng, and Preskill~\cite{huang2020predicting}. Informally, this theorem states that for $M$ rank 1 projective measurements $O_1,\dots,O_M$ and an unknown $n$-qubit state $\rho$, it is possible to estimate $\trace(O_i\rho)$ for all $i$, up to precision $\epsilon$, by only performing $T = O(\log(M)/\epsilon^2)$ single-copy random Clifford measurements on $\rho$. 

Employing this theorem, we show that a computationally unbounded adversary can estimate all the values $\langle \qpk_{\dk} \vert \qpk^*\rangle$ from random Clifford measurements on polynomially many copies of $\ket{\qpk^*}$. Having the estimated values of $\langle \qpk_{\dk} \vert \qpk^*\rangle$ the adversary picks a $\dk$ such that the estimated value is relatively large and uses this key to decrypt the challenge ciphertext. Now invoking Lemma~\ref{lemma:pk_distance} we conclude that the probability of this adversary winning the IND-CPA game is significantly more than $1/2$.

\subsection{Related works} %
\label{sub:related_works}
The notion of qPKE was already considered in the literature, although without introducing formal security definitions.
For instance, Gottesman~\cite{GottesmanConstruction} proposed a candidate construction in an oral presentation, without a formal security analysis. The scheme has quantum public-keys and quantum
ciphers, which consumes the public-key for encryption. Kawachi et al.~\cite{EC:KKNY05} proposed a construction of qPKE (with quantum keys and ciphertexts) from a newly introduced hardness assumption, related to the graph automorphism problem. \cite{C:OkaTanUch00} defines and constructs a public-key encryption where the keys,
plaintexts and ciphers are classical, but the algorithms are quantum the (key-generation uses Shor's algorithm).
One of the contributions of this work, is to provide a unifying framework for these results, as well as improve in terms of computational assumptions and security guarantees.

In~\cite{NI09}, the authors define and provide impossibility results regarding
encryption with quantum public-keys.
Classically, it is easy to show that a (public) encryption scheme cannot have
deterministic ciphers; in other words, encryption must use randomness.
They show that this is also true for a quantum encryption scheme with quantum
public-keys.
In~\cite{cryptoeprint:2020/1557}, a secure encryption scheme with quantum public
keys based on the LWE assumption is introduced.
That work shows (passive) indistinguishable security, and is not IND-CPA secure.

In~\cite{C:MorYam22,cryptoeprint:2022/1336}, the authors study digital
signatures with quantum signatures, and more importantly in the context of this
work, quantum public-keys.

The study of quantum pseudorandomness and its applications has recently experienced rapid advancements. One of the most astonishing aspects is that PRS (Pseudorandom states) and some of its variations are considered weaker than one-way functions. In other words, they are implied by one-way functions, and there exists a black-box separation between them. However, it has been demonstrated that these primitives are sufficient for many applications in Minicrypt and even extend beyond it.  A graph presenting the various notions of quantum pseudorandomness and its application is available  at~\url{https://sattath.github.io/qcrypto-graph/}.

\subsection{Concurrent and subsequent work}
\ifanon
\else
    This work is a merge of two concurrent and independent works~\cite{cryptoeprint:2023/306,cryptoeprint:2023/345}, with a unified presentation and more results.
    
\fi
In a concurrent and independent work, Coladangelo~\cite{cryptoeprint:2023/282}
proposes a qPKE scheme with a construction that is very different from ours, and
uses a quantum trapdoor function, which is a new notion first introduced in
their work.
Their construction is based on the existence of quantum-secure OWF.
However, in their construction, each quantum public-key can be used to encrypt a
single message (compared to our construction from OWF, where the public-key can
be used to encrypt multiple messages), and the ciphertexts are quantum (whereas
our construction from OWF has classical ciphertexts).
They do not consider the stronger notion of IND-CCA security.

Our paper has already generated interest in the community: Two follow-up
works~\cite{cryptoeprint:2023/490,cryptoeprint:2023/500} consider a \emph{stronger} notion of qPKE
where the public-key consists of a classical and a quantum part, and the
adversary is allowed to tamper arbitrarily with the quantum part (but not with
the classical component).\footnote{Because of this stronger security definition,
  here the notion of public-keys with mixed states is meaningful since there is
  an alternative procedure to ensure that the key is well-formed (e.g., signing
  the classical component).}
The authors provide constructions assuming quantum-secure OWF.
While their security definition is stronger, we remark that our approach is more
general, as exemplified by the fact that we propose constructions from
potentially weaker computational assumptions.
In~\cite{barhoush2023sign}, the authors give another solution for the quantum public-key distribution problem using time-dependent signatures, which can be constructed from quantum-secure OWF, but the (classical) verification key needs to be continually updated.

\section{Preliminaries}
\label{sec:definitions}
\subsection{Notation}
Throughout this paper, \(\secpar\) denotes the security parameter.
The notation \(\negl\) denotes any function \(f\) such that \(f(\lambda) = \lambda^{-\omega(1)}\), and
\(\poly\) denotes any function \(f\) such that \(f(\lambda) = \lambda^{\mathcal{O}(1)}\).
When sampling uniformly at random a value \(a\) from a set~\(\mathcal{U}\), we
employ the notation \(a \sample \mathcal{U}\).
When sampling a value \(a\) from a probabilistic algorithm \(\adv\), we employ the
notation \(a \leftarrow \adv\).
Let \(\size{\cdot}\) denote either the length of a string, or the cardinal of a finite
set, or the absolute value.
By \(\ppt\) we mean a polynomial-time non-uniform family of probabilistic
circuits, and by \(\qpt\) we mean a polynomial-time family of quantum circuits.

\subsection{Quantum Information}
For a more in-depth introduction to quantum information, we refer the reader to
\cite{DBLP:books/daglib/0046438}.
We denote by \(\Hilb_{M}\) a complex Hilbert space with label \(M\) and finite
dimension \(\dim{M}\) .
We use the standard bra-ket notation to work with pure states
\(\ket{\psi} \in \Hilb_{M}\).
The class of positive, Hermitian, trace-one linear operators on \(\Hilb_{M}\) is
denoted by \(\mathcal{D}(\Hilb_{M})\).
A quantum register is a physical system whose set of valid states is
\(\mathcal{D}(\Hilb_{M})\); in this case we label by \(M\) the register itself.
The maximally mixed state (i.e., uniform classical distribution) is written as
\(\ident / \dim M\) on \(M\).

The support of a quantum state \(\varrho\) is its cokernel (as a linear operator).
Equivalently, this is the span of the pure states making up any decomposition of
\(\varrho\) as a convex combination of pure states.
We will denote the orthogonal projection operator onto this subspace by
\(P^{\varrho}\).
The two-outcome projective measurement (to test if a state has the same or
different support as \(\varrho\)) is then \(\{P^{\varrho}, \ident - P^{\varrho}\}\).

A quantum \emph{\(t\)-design} (for a fixed \(t\)) is a probability distribution
over pure quantum states which can duplicate properties of the probability
distribution over the Haar measure for polynomials of degree \(t\) or less.
A quantum \(t\)-design with \(n\)-qubit output can be efficiently implemented
with a random \(\poly[t, n]\)-size quantum circuits.

We recall the SWAP test on two quantum states \(\ket{\psi}, \ket{\phi}\) which is an
efficient algorithm that outputs \(0\) with probability
\(\frac{1}{2} + \frac{1}{2}\abs{\braket{\psi}{\phi}}^{2}\).
In particular, if the states are equal, the output of SWAP test is always \(0\).

Next, we state a well-known fact about the quantum evaluation of classical
circuits.

\begin{fact}\label{thm:quantum}
  Let $f\colon\{0,1\}^n \to \{0,1\}^m$ be a function which is efficiently
  computable by a classical circuit.
  Then there exists a unitary~$U_f$ on $(\CC^2)^{\ot n+m}$ which is efficiently
  computable by a quantum circuit (possibly using ancillas) such that, for
  all~$x \in\{0,1\}^n$ and~$y\in\{0,1\}^m$,
\begin{align*}
    U_f\colon \ket{x}\ket{y} \mapsto \ket{x}\ket{y \oplus f(x)}.
\end{align*}
\end{fact}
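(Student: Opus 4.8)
The plan is to reduce the statement to the existence of a universal \emph{reversible} gate set together with Bennett's uncomputation trick. First I would recall that the Toffoli gate $\mathrm{CCNOT}$, together with $\mathrm{NOT}$ (the Pauli $X$) and $\mathrm{CNOT}$, is universal for classical reversible computation, and that each of these is a permutation of computational-basis states, hence a unitary. Now take a classical circuit of size $s = \poly(n)$ computing $f$ from gates in a fixed finite basis (say $\mathrm{AND}$, $\mathrm{OR}$, $\mathrm{NOT}$, and $\mathrm{FANOUT}$). I would replace each gate by an $O(1)$-size gadget built from $\mathrm{CCNOT}/\mathrm{CNOT}/X$ acting on the relevant wires plus a constant number of fresh ancilla qubits initialized to $\ket{0}$: for instance $\mathrm{AND}$ of two bits into a fresh $\ket{0}$ is a single $\mathrm{CCNOT}$, $\mathrm{FANOUT}$ is a single $\mathrm{CNOT}$ into a fresh $\ket{0}$, and $\mathrm{NOT}$ is a single $X$; $\mathrm{OR}$ is obtained from $\mathrm{AND}$ and $\mathrm{NOT}$ by De Morgan. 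Composing these gadgets (by induction on circuit size) yields an efficiently computable unitary $V$ on the $n$ input qubits together with $a = O(s)$ ancilla qubits such that, for all $x$,
\[
  V\colon \ket{x}\ket{0^{a}} \;\mapsto\; \ket{x}\,\ket{g(x)}\,\ket{f(x)},
\]
where the last $m$ ancilla wires carry the output $f(x)$ and the remaining $a-m$ wires carry intermediate ("garbage") values $g(x)$.

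Next I would eliminate the garbage so that the map is clean on exactly $n+m$ qubits. Bringing in the target register holding $y$, apply $m$ gates $\mathrm{CNOT}$, each controlled on one output wire of $V$ and targeting the corresponding qubit of the $y$-register; this realizes $\ket{y}\ket{f(x)} \mapsto \ket{y \oplus f(x)}\ket{f(x)}$ and leaves the input and garbage wires untouched. Finally apply $V^{\dagger}$, which is again efficiently computable (reverse the gadget circuit, using that $\mathrm{CCNOT}$, $\mathrm{CNOT}$, $X$ are self-inverse), to uncompute: the $a$ ancilla wires return to $\ket{0^{a}}$ and the input register to $\ket{x}$. Thus $U_f \defeq V^{\dagger} \cdot (\text{the } m \text{ CNOTs}) \cdot V$ implements, on every computational-basis state,
\[
  \ket{x}\ket{y}\ket{0^{a}} \;\mapsto\; \ket{x}\,\ket{y \oplus f(x)}\,\ket{0^{a}}.
\]
Since the ancilla register is disentangled from everything else and is returned to its initial value $\ket{0^{a}}$ independently of $x$ and $y$, restricting to $\ket{0^{a}}$ on the ancillas gives a well-defined linear map on $(\CC^{2})^{\ot n+m}$; as it permutes the computational basis of that space, it is unitary and has exactly the claimed action. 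Its gate count is $O(s) + m + O(s)$, which is polynomial in $n$, so it is efficiently computable by a quantum circuit (using the $a$ ancillas).

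The one real subtlety — the step I would be most careful with — is the uncomputation: one must verify that $V$ genuinely returns all $a$ ancillas to $\ket{0^{a}}$ regardless of the input, so that they factor out and $U_f$ is a bona fide unitary on the $(n+m)$-qubit space rather than merely an isometry leaving garbage behind, and that the $\mathrm{CNOT}$ copy step does not interfere with $V^{\dagger}$ acting as the exact inverse of $V$ (it does not, since the CNOTs only read the $f(x)$-carrying output wires and write into the separate $y$-register). Both points follow from the gate-by-gate construction together with the fact that the gadgets only ever write $\mathrm{AND}$/$\mathrm{FANOUT}$ results into \emph{fresh} $\ket{0}$ ancillas, so $V^{\dagger}$ restores them in reverse order. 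Everything else is a routine induction. An alternative, even more self-contained phrasing is to argue entirely at the level of reversible classical circuits and then invoke that any permutation of bit-strings is implemented by a permutation matrix, hence a unitary; but the explicit $V^{\dagger} \cdot \mathrm{CNOT}^{\otimes m} \cdot V$ construction above is the cleanest route and makes the "(possibly using ancillas)" clause transparent.
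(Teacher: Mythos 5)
Your proposal is correct: the compile-to-Toffoli/CNOT/NOT construction followed by copying the output via CNOTs and uncomputing with $V^{\dagger}$ is the standard Bennett-style argument, and you correctly flag and resolve the only subtle point (the ancillas being returned to $\ket{0^{a}}$ so that $U_f$ is unitary on $(\CC^2)^{\otimes n+m}$). The paper states this as a well-known fact without proof, and your argument is exactly the canonical one it implicitly appeals to.
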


\subsection{Quantum-Secure Pseudorandom Functions}\label{subsec:prf}
Throughout this work, we often refer to a \emph{pseudorandom function} (PRF)
first introduced in~\cite{goldreich1986construct}.
This is a keyed function, denoted~$\PRF$, that can be evaluated in polynomial
time satisfying a certain security property.
In this work, we require $\PRF$ to be \emph{quantum-secure}, which, loosely
speaking, says that an adversary with oracle access to $\PRF$ cannot distinguish
it from a truly random function, even given superposition queries.
It is known that quantum-secure PRF can be constructed from any quantum-secure
one-way function~\cite{FOCS:Zhandry12}.

\begin{definition}[Quantum-secure PRF]
  We say that a keyed family of functions $\{f_k\}_k$ is a \emph{quantum-secure
    pseudorandom function (PRF) ensemble} if, for any \(\qpt\) adversary~$\adv$,
  we have
\begin{align*}
  \abs{
    \Pr\left[ 1 \gets \adv(1^\lambda)^{f_k} \right]
  - \Pr\left[ 1 \gets \adv(1^\lambda)^{f} \right]
  }
  \leq \mu(\lambda),
\end{align*}
where $k\xleftarrow{\$}\{0,1\}^\lambda$, $f$ is a truly random function, and the oracles can be accessed in superposition, that is, they implement the following unitaries
\[
  \ket{x}\ket{z} \xmapsto{U_{f_k}} \ket{x}\ket{z\oplus f_k(x)}
\quad\text{and}\quad
  \ket{x}\ket{z} \xmapsto{U_f} \ket{x}\ket{z\oplus f(x)},
\]
respectively.
\end{definition}

\subsection{Post-Quantum IND-CCA Symmetric-Key Encryption}
We briefly recall the definition of a symmetric-key encryption scheme (SKE).
\begin{definition}
An SKE consists of 2 algorithms with the
following syntax:
\begin{enumerate}
  \item $\enc(\sk, \pt)$: a $\ppt$ algorithm, which receives a symmetric-key $\sk\in\{0,1\}^\secpar$ and a plaintext $\pt$, and outputs a ciphertext $\ct$.
  \item $\dec(\sk, \ct)$: a deterministic polynomial-time algorithm, which takes a symmetric-key \(\sk\) and a ciphertext \(\ct\), and outputs a plaintext
    \(\pt
    \).
\end{enumerate}
\end{definition}

We say that a SKE scheme is perfectly \emph{correct} if for every plaintext
\(\pt \in \{0,1\}^*\) and symmetric-key $\sk\in \{0,1\}^\lambda$, $\dec(\sk,\enc(\sk,\pt)) = \pt$. 

\begin{definition}
  \label{def:cca-ske}
  An SKE is post-quantum IND-CCA secure if for every \(\qpt\)
  adversary \(\adv \coloneqq (\adv_{1}, \adv_{2})\), there exists a negligible
  function \(\epsilon\) such that the following holds for all $\lambda$:
\begin{align*}
\Pr\left[
\Tilde{b} =b 
\ \middle\vert
\begin{array}{ll}
\sk \xleftarrow{\$} \{0,1\}^{\lambda}  \\
\pt_0,\pt_1 \gets \adv_1^{\enc(\sk,\cdot),\dec(\sk,\cdot)}(1^\lambda)  \\
b\xleftarrow{\$} \{0,1\} \\ 
\ct^* \gets \enc(\sk,\pt_b)\\  
\tilde{b} \gets \adv_2^{\enc(\sk,\cdot) ,\dec^*(\sk,\cdot)}(\ct^*,1^\lambda)
\end{array}
\right] \leq 1/2 + \epsilon(\lambda),
\end{align*}
Where $\dec^*(\sk,\cdot)$ is the same as $\dec(\sk,\cdot)$ but returns $\bot$ on input the challenge ciphertext $\ct^*$.

\end{definition}
Note that as the adversary is not given superposition access to the $\enc$, $\dec$ oracles one can build post-quantum IND-CCA SKE from quantum-secure OWF the same way as it is done classically with message authentication codes. 

\subsection{Pseudorandom Function-Like State (PRFS) Generators}

The notion of pseudorandom function like states was first introduced by Ananth,
Qian and Yuen in~\cite{C:AnaQiaYue22}.
A stronger definition where the adversary is allowed to make superposition
queries to the challenge oracles was introduced in the follow-up
work~\cite{TCC:AGQY22}.
We reproduce their definition here:

\begin{definition}[Quantum-accessible PRFS generator]
  \label{def:prfs}
We say that a $\qpt$ algorithm $G$ is a quantum-accessible secure pseudorandom function-like state generator if for all $\qpt$ (non-uniform) distinguishers $A$ if there exists a negligible function $\epsilon$, such that for all $\lambda$, the following holds:
 \[
    \left|\Pr_{k \leftarrow \{0,1\}^{\secparam} }\left[A_\lambda^{\ket{{\cal O}_{\sf PRFS}(k,\cdot)}}(\rho_\lambda) = 1\right] - \Pr_{{\cal O}_{\sf Haar}}\left[A_\lambda^{\ket{{\cal O}_{\sf Haar}(\cdot)}}(\rho_\lambda) = 1\right]\right| \le \epsilon(\lambda),
  \]
  where:
\begin{itemize}
    \item ${\cal O}_{\sf PRFS}(k,\cdot)$, %
    on input a $d$-qubit register ${\bf X}$, does the following: it  applies an isometry channel that is controlled on the register ${\bf X}$ containing $x$, it creates and stores  $G_{\secparam}(k,x)$ in a new register ${\bf Y}$.  It outputs the state on the registers ${\bf X}$ and ${\bf Y}$.
    \item ${\cal O}_{\sf Haar}(\cdot)$, modeled as a  channel, on input a $d$-qubit register ${\bf X}$, does the following: it applies a channel that controlled on the register ${\bf X}$ containing $x$, stores $\ketbra{\vartheta_x}$ in a new register ${\bf Y}$, where $\ket{\vartheta_x}$ is sampled from the Haar distribution. It outputs the state on the registers ${\bf X}$ and ${\bf Y}$.
\end{itemize}
Moreover, $A_{\secparam}$ has superposition access to ${\cal O}_{\sf PRFS}(k,\cdot)$ and ${\cal O}_{\sf Haar}(\cdot)$ (denoted using the ket notation).
\par We say that $G$ is a $(d(\lambda),n(\lambda))$-QAPRFS generator to succinctly indicate that its input length is $d(\lambda)$ and its output length is $n(\lambda)$.
\end{definition}

\subsection{Quantum Pseudorandomness with Proofs of Destruction}
\label{sect:def-prfspd}
We import the definition of pseudorandom function-like states with proofs of destruction (PRFSPD) from \cite{cryptoeprint:2023/543}.

\begin{definition}[PRFS generator with proof of destruction]%
  A $\PRFSPD$ scheme with key-length $w(\secpar)$, input-length $d(\secpar)$, output length $n(\secpar)$ and proof length $c(\secpar)$ is a tuple of  $\qpt$ algorithms $\gen,\destruct,\ver$ with the following syntax:
  \begin{enumerate}
       \item $\ket{\psi^x_k}\gets \gen(k,x)$: takes a key $k\in \{0,1\}^w$, an input string $x\in \{0,1\}^{d(\secpar)}$, and outputs an $n$-qubit pure state
        $\ket{\psi^x_k}$.
      \item $p\gets \destruct(\ket{\phi})$: takes an $n$-qubit quantum state $\ket{\phi}$ as input, and outputs a $c$-bit classical string, $p$.
      \item $b\gets \ver(k,x,q)$: takes a key $k\in \{ 0,1 \}^w$, a $d$-bit input string $x$, a $c$-bit classical string $p$ and outputs a boolean output $b$.
  \end{enumerate}

  \paragraph{Correctness.} A $\PRFSPD$ scheme is said to be correct if for every $x\in \{0,1\}^{d}$,
  \[\Pr_{k\uniform{w}}[1\gets\ver(k,x,p)\mid p\gets \destruct(\ket{\psi^x_k}); \ket{\psi^x_k}\gets \gen(k,x)]=1\]
  \paragraph{Security.}
  \begin{enumerate}
      \item \textbf{Pseudorandomness:} A $\PRFSPD$ scheme is said to be (adaptively) pseudorandom if for any $\qpt$ adversary $\adv$, and any polynomial $m(\secpar)$, there exists a negligible function $\negl$, such that
      \begin{align*}&|\Pr_{ k\gets\{0,1\}^w}[\adv^{\ket{\gen(k,\cdot)}}(1^\secpar)=1]\\
      &-\Pr_{ \forall x\in \{0,1\}^d, \ket{\phi^x}\gets\mu_{(\CC^2)^{\tensor n}}}[\adv^{\ket{\haar^{\{\ket{\phi^x}\}_{x\in \{0,1\}^d}}(\cdot)}}(1^\secpar)=1]|=\negl
      \end{align*}
      where $\forall x\in \{0,1\}^d$, $\haar^{\{\ket{\phi^x}\}_{x\in \{0,1\}^d}}(x)$ outputs $\ket{\phi^x}$. Here we note that $\adv$ gets quantum access to the oracles.\label{item:pseudorandomness}
      \item \textbf{Unclonability{-}of{-}proofs}: A $\PRFSPD$ scheme satisfies $\secproof$ if for any $\qpt$ adversary $\adv$ in cloning game (see Game~\ref{game:cloning-prfspd}), there exists a negligible function $\negl$ such that
      \[\Pr[\clon{\adv}{\PRFSPD}=1]=\negl.\]
      \label{item:security of proof of independence}
\begin{game}
\caption{$\clon{\adv}{\PRFSPD}$ %
}
\begin{algorithmic}[1]
    \State Given input $1^\secpar$, Challenger samples $k\gets \{0,1\}^{w(\secpar)}$ uniformly at random.
    \State Initialize an empty set of variables, $S$.
    \State $\adv$ gets oracle access to $\gen(k,\cdot)$,  $\ver(k,\cdot, \cdot)$ as oracle.
    \For{$\gen$ query $x$ made by $\adv$}
    \If{$\exists$ variable $t_x\in S$}
    $t_x=t_x+1$.
    \Else{ Create a variable $t_x$ in $S$, initialized to $1$.}
    \EndIf
    \EndFor
    \State $\adv$ outputs $x,c_1,c_2,\ldots,c_{t_x+1}$ to the challenger.
    \State Challenger rejects if $c_i$'s are not distinct.
    \For{$i\in[m+1]$}
    $b_i\gets \ver(k,x,c_i)$
    \EndFor
    \State Return $\wedge_{i=1}^{m+1} b_i$.
\end{algorithmic}
\label{game:cloning-prfspd}
\end{game}

\end{enumerate}

\label{definition:PRFSPD}
\end{definition} %

\section{Definitions of qPKE}
\label{sect:defs}
In this section, we introduce the new notion of encryption with quantum public
keys (\cref{def:eqpk}).
The indistinguishability security notions are defined in~\cref{sect:def-eo}
and~\cref{sect:def-wo-eo}.

\begin{definition}[Encryption with quantum public keys]
\label{def:eqpk}
Encryption with quantum public keys (qPKE) consists of 4 algorithms with the
following syntax:
\begin{enumerate}
  \item \(\dk \gets \Gen(\secparam)\): a \(\qpt\) algorithm, which receives the security
    parameter and outputs a classical decryption key.
  \item \(\ket{\qpk}\gets \qpkgen(\dk)\): a \(\qpt\) algorithm, which receives a
    classical decryption key \(\dk\), and outputs a quantum public key
    \(\ket{\qpk}\).
    In this work, we require that the output is a pure state, and that \(t\)
    calls to \(\qpkgen(\dk)\) should yield the same state, that is,
    \(\ket{\qpk}^{\tensor t} \).
  \item \((\qpk',\qc) \gets \qenc(\qpk,m)\): a \(\qpt\) algorithm, which receives a
    quantum public key \(\qpk\) and a plaintext \(m\), and outputs a (possibly
    classical) ciphertext \(\qc\) and a recycled public key~\(\qpk'\).
  \item \(m \gets \qdec(\dk, \qc)\): a \(\qpt\) algorithm, which uses a decryption
    key \(\dk\) and a ciphertext \(\qc\), and outputs a classical plaintext
    \(m\).
\end{enumerate}
\end{definition}

We say that a qPKE scheme is \emph{correct} if for every message
\(m \in \{0,1\}^*\) and any security parameter \(\secpar \in \NN\), the following
holds:
\[
\Pr\left[
\qdec(\dk, \qc) = m
\ \middle\vert
\begin{array}{c}
\dk\gets \Gen(\secparam) \\
\ket{\qpk} \gets \qpkgen(\dk) \\
(\qpk', \qc) \gets \qenc(\ket{\qpk}, m)
\end{array}
\right] \geq 1 -\negl,
\]
where the probability is taken over the randomness of \(\Gen\), \(\qpkgen\),
\(\qenc\) and \(\qdec\).
We say that the scheme is reusable if completeness holds to polynomially many messages using a single quantum public key. More precisely, we say that a qPKE scheme is \emph{reusable} if for every security parameter \(\secpar \in \NN\), polynomial number of messages \(m_1,\ldots,m_{n(\lambda)} \in \{0,1\}^*\), the following
holds:
\[
\Pr\left[
\forall i \in [n(\secpar)],\, \qdec(\dk, \qc_i) = m_i
\ \middle\vert
\begin{array}{c}
\dk\gets \Gen(\secparam) \\
\ket{\qpk_1} \gets \qpkgen(\dk) \\
(\qpk_2, \qc_2) \gets \qenc(\ket{\qpk_1}, m_1)\\
 \vdots\\
(\qpk_{n+1}, \qc_n) \gets \qenc(\ket{\qpk_i}, m_{n(\secpar)})\\
\end{array}
\right] \geq 1 -\negl.
\]
\subsection{Security Definitions for qPKE with Classical Ciphertexts}
\label{sect:def-eo}
In this section, we present a quantum analogue of classical indistinguishability
security for qPKE with classical ciphertexts.
We note that there are few differences.
Firstly, since in general the public keys are quantum states and unclonable, in
the security games, we allow the adversary to receive polynomially many copies
of \(\ket{\qpk}\), by making several calls to the \(\qpkgen(\dk)\) oracle.
Secondly, in the classical setting, there is no need to provide access to an
encryption oracle since the adversary can use the public key to apply the
encryption themself.
In the quantum setting, this is not the case: as we will see, the quantum public
key might be measured, and the ciphertexts might depend on the measurement
outcome.
Furthermore, the quantum public key can be reused to encrypt multiple different
messages.
This motivates a stronger definition of indistinguishability with encryption
oracle, in which the adversary gets oracle access to the encryption, denoted as
IND-ATK-EO security, where ATK can be either chosen-plaintext attacks (CPA),
(adaptive or non-adaptive) chosen-ciphertext attacks (CCA1 and CCA2).

\begin{game}
  \caption{Indistinguishability security with an encryption oracle (IND-ATK-EO)
    for encryption with quantum public key and classical ciphertext schemes.}
  \label{game:ind-eo}
\begin{algorithmic}[1]
  \State The challenger generates \(\dk \gets \Gen(\secparam)\).
  \State The adversary gets \(\secparam\) as an input, and oracle access to
  \(\qpkgen(\dk)\).
  \State The challenger generates \(\ket{\qpk}\gets \qpkgen(\dk)\).
  Let
  \(\qpk_{1} \coloneqq \ket{\qpk}\).
  \State For \(i=1,\ldots,\ell\), the adversary creates a classical message \(m_i\) and
  send it to the challenger.
  \State The challenger computes \((\qc_i,\qpk_{i+1}) \gets \qenc(\qpk_i,m_i)\) and send \(\qc_i\) to the adversary.
  \State During step (2) to step (5), the adversary also gets classical oracle
  access to an oracle \(\ooracle_{1}\).
  \State The adversary sends two messages \(m'_0,m'_1\) of the same length to the challenger.
  \State The challenger samples \(b\in_R \{0,1\}\), computes
  \((\qc^{*}, \qpk_{l+2}) \gets \qenc(\qpk_{\ell+1},m'_b)\) and sends \(\qc^{*}\) to the adversary.
  \State For \(i=\ell+2,\ldots,\ell'\), the adversary creates a classical message \(m_i\) and send it to the challenger.
  \State The challenger computes \((\qc_i,\qpk_{i+1}) \gets \qenc(\qpk_i,m_i)\) and send \(\qc_i\) to the adversary.
  \State During step (9) to step (10), the adversary also gets classical oracle
  access to an oracle \(\ooracle_{2}\).
  Note that after step (7), the adversary no longer gets access to oracle \(\ooracle_{1}\).
  \State The adversary outputs a bit \(b'\).
\end{algorithmic}
We say that the adversary wins the game (or alternatively, that the outcome of
the game is 1) iff \(b=b'\).
\end{game}
We define the oracles \(\ooracle_{1}, \ooracle_{2}\) depending on the level of security as follows.

\begin{pchstack}[center]
\procedure{ATK}{%
    \text{CPA}  \\
    \text{CCA1} \\
    \text{CCA2} }

\procedure{Oracle $\ooracle_1$}{%
\t  \varnothing  \\
\t  \qdec(\dk, \cdot)  \\
\t  \qdec(\dk, \cdot)
}

\procedure{Oracle $\ooracle_2$}{%
\t  \varnothing  \\
\t  \varnothing  \\
\t  \qdec^*(\dk, \cdot)
}
\end{pchstack}
Here \(\qdec^{*}(\dk, \cdot)\) is defined as \(\qdec(\dk, \cdot)\), except that it
return \(\bot\) on input the challenge ciphertext \(\qc^{*}\).

\begin{definition}
  \label{def:ind-eo}
  A qPKE scheme is IND-ATK-EO secure if for every \(\qpt\) adversary, there
  exists a negligible function \(\epsilon\) such that the probability of winning the
  IND-ATK-EO security game (\cref{game:ind-eo}) is at most
  \(\frac{1}{2}+\epsilon(\secpar)\).
\end{definition}

\begin{remark}
  The definition presented in~\cref{def:ind-eo} is stated for the single
  challenge query setting.
  Using the standard hybrid argument, it is straightforward to show that
  single-challenge definitions also imply many-challenge definitions where the
  adversary can make many challenge queries.
\end{remark}

\begin{remark}
Note that the IND-CCA2-EO definition is only well-defined for schemes with classical ciphertexts. The other two notions are well-defined even for quantum ciphertexts, though we do not use those.
\end{remark}

\subsection{Security Definitions for qPKE with Quantum Ciphertexts}
\label{sect:def-wo-eo}
We now give a definition for qPKE with quantum ciphertexts.
In the case of adaptive chosen ciphertext security, the definition is
non-trivial due to the no-cloning and the destructiveness of quantum
measurements.
We note there are indeed several works considering the notions of chosen-ciphertext security in the quantum setting: \cite{EC:AlaGagMaj18} defines chosen-ciphertext security for quantum symmetric-key encryption (when the message is a quantum state), and \cite{C:BonZha13,CEV22} defines chosen-ciphertext security for classical encryption under superposition attacks.
However, extending the technique from \cite{EC:AlaGagMaj18} to the public-key setting is non-trivial, and we leave this open problem for future work.
In this section, we only consider security notions under chosen-plaintext
attacks and non-adaptive chosen-ciphertext attacks.

Even though one can similarly define security notions with encryption oracle for schemes with quantum ciphertexts as in~\cref{sect:def-eo}, we note that in all constructions of qPKE with quantum ciphertexts present
in this work are not reusable, and thus we do not present the definition in
which the adversary has oracle access to the encryption oracle for the sake of
simplicity.
We denote these notions as IND-ATK, where ATK is either chosen-plaintext attacks
(CPA) or non-adaptive chosen-ciphertext attacks (CCA1).

\begin{game}
  \caption{IND-ATK security game for encryption with quantum public key and
    quantum ciphertexts schemes.}
  \label{game:ind-atk}
  \begin{algorithmic}[1]
    \State The challenger generates \(\dk \gets \Gen(\secparam)\).
    \State The adversary \(\adv_{1}\) gets \(\secparam\) as an input, and oracle
    access to \(\qpkgen(\dk)\), $\qenc(\qpk,\cdot)$ and \(\ooracle_{1}\), and sends \(m_0,m_1\) of the
    same length to the challenger.
    \(\adv_{1}\) also output a state \(\ket{\mathsf{st}}\) and sends it to \(\adv_{2}\).
    \State The challenger samples \(b\in_R \{0,1\}\), generates
    \(\ket{\qpk}\gets \qpkgen(\dk)\) and sends \(c^{*} \gets \qenc(\ket{\qpk},m_b)\) to the
    adversary \(\adv_{2}\).
    \State \(\adv_{2}\) gets oracle access to \(\qpkgen(\dk)\),  $\qenc(\qpk,\cdot)$.
    \State The adversary \(\adv_{2}\) outputs a bit \(b'\).
  \end{algorithmic}
  We say that the adversary wins the game (or alternatively, that the outcome of
  the game is 1) iff \(b=b'\).
\end{game}
The oracles \(\ooracle_{1}\) is defined depending on the level of security as
follows.
\begin{pchstack}[center]
\procedure{ATK}{%
    \text{CPA}  \\
    \text{CCA1}
    }

\procedure{Oracle $\ooracle_1$}{%
\t  \varnothing  \\
\t  \qdec(\dk, \cdot)
}

\end{pchstack}

\begin{definition}
  \label{def:ind-atk}
  A qPKE scheme with quantum ciphertexts is IND-ATK secure if for every \(\qpt\)
  adversary \(\adv \coloneqq (\adv_{1}, \adv_{2})\), there exists a negligible
  function \(\epsilon\) such that the probability of winning the IND-ATK security game
  (\cref{game:ind-atk}) is at most \(\frac{1}{2}+\epsilon(\secpar)\).
\end{definition}

\section{Constructions of CCA-Secure qPKE}
\label{sect:constructions}
In this section, we present our qPKE constructions from OWF and PRFS and prove
that their CCA security.
The former (given in~\cref{sec:cca_qpke}) has classical ciphertexts, and allows
to encrypt arbitrary long messages.
The latter (given in~\cref{sect:cca_from_prfs}) has quantum ciphertexts, and
only allows to encrypt a single-bit message.
However, we note that the latter is based on a weaker assumption than the
former.
Finally, in~\cref{sec:nm-cpa}, we give a remark on the black-box construction of
non-malleable qPKE from CPA-secure qPKE using the same classical approach.

\subsection{CCA-Secure Many-Bit Encryption from OWF}
\label{sec:cca_qpke}

We start by presenting a simple qPKE construction from OWF which prove that it provides our strongest notion of security, i.e. IND-CCA-EO security. The scheme is formally presented in \cref{fig:qkem}. The ciphertexts produced by the scheme are classical, and the public-keys are reusable. The cryptographic components of our construction are a quantum secure PRF family $\{f_{k}\}$ and a post-quantum IND-CCA secure symmetric-key encryption scheme $(\SE.\enc,\SE.\dec)$ which can both be built from a quantum-secure OWF~\cite{FOCS:Zhandry12,EC:BonZha13}.

\begin{construction}[IND-CCA-EO secure qPKE from OWF]
  \label{fig:qkem}
  \begin{itemize}[label=\(\bullet\),itemsep=1ex]
    \item \textbf{Assumptions:} A family of quantum-secure pseudorandom
      functions $\{f_k\}_k$, and post-quantum IND-CCA SKE $(\SE.\enc , \SE.\dec)$.
    \item $\underline{\Gen(\secparam)}$
      \begin{compactenum}
        \item $\dk \xleftarrow{\$} \lbrace 0, 1 \rbrace^\lambda$
        \item $\ket{\qpk} \gets \sum_{x \in \lbrace 0, 1 \rbrace^\lambda} |x, f_{\dk}(x) \rangle$
      \end{compactenum}
    \item $\underline{\qenc(\ket{\qpk}, m)}$
      \begin{compactenum}
        \item Measure $\ket{\qpk}$ to obtain classical strings $x, y$.
        \item Let $c_0 \gets x$ and $c_1 \gets \SE.\enc(y,m)$.
        \item Output $(c_0, c_1)$
      \end{compactenum}
    \item $\underline{\qdec(\dk,(c_0, c_1))}$
      \begin{compactenum}
        \item Compute $y \gets f_{\dk}(c_0)$.
        \item Compute $m \gets \SE.\dec(y,c_1)$ and return $m$.
      \end{compactenum}
  \end{itemize}
\end{construction}

It can be trivially shown that the scheme achieves perfect correctness if the underlying SKE provides the perfect correctness property. 

\begin{theorem}
  \label{thm:kem_cca}
  Let $\{f_{k}\}_{k}$ be a quantum secure PRF and $(\SE.\enc,\SE.\dec)$ be a post-quantum IND-CCA secure SKE. Then, the quantum qPKE given in~\cref{fig:qkem} is IND-CCA-EO secure.
\end{theorem}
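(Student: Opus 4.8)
The plan is to proceed by a hybrid argument, moving from the real IND-CCA-EO game to a game in which the challenge ciphertext carries no information about the bit $b$. The crucial structural observation is that the quantum public key $\ket{\qpk} = \sum_x \ket{x, f_\dk(x)}$ is measured \emph{once}, at the first encryption query, and its computational-basis measurement outcome $(x^*, y^* = f_\dk(x^*))$ is thereafter \emph{fixed} and reused for every subsequent encryption (including the challenge). So after the first $\qenc$ call, the state $\qpk_i$ has collapsed to the classical pair $(x^*, y^*)$, and all encryptions — the pre-challenge ones $\qc_i = (x_i, \SE.\enc(y_i, m_i))$, the challenge $\qc^* = (x^*, \SE.\enc(y^*, m'_b))$, and the post-challenge ones — use values $y_i = f_\dk(x_i)$ that are deterministic functions of $\dk$.

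\textbf{Hybrid 0:} the real IND-CCA-EO game. \textbf{Hybrid 1:} replace the PRF $f_\dk$ by a truly random function $f$ everywhere — in the generation of the (many copies of the) public key $\sum_x \ket{x, f(x)}$, in answering encryption queries, and in the $\qdec$/$\qdec^*$ oracles (which compute $y \gets f(c_0)$ and then $\SE.\dec(y, c_1)$). Indistinguishability of Hybrid 0 and Hybrid 1 follows from quantum-security of the PRF: a distinguisher simulates the entire game using (superposition) oracle access to the challenge function, since every copy of $\ket{\qpk}$ is prepared by applying $U_f$ to a uniform superposition, and decryption queries are answered by one classical (in fact, could even be superposition-free) evaluation of the oracle. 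Here I should note one subtlety to address: the reduction must answer polynomially many $\qpkgen$ queries, each requiring one oracle call to build a fresh copy $\sum_x\ket{x,f(x)}$, which is fine since the PRF adversary may make polynomially many queries.

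\textbf{Hybrid 2:} Let $(x^*, y^*)$ be the measurement outcome fixed at the first encryption query (if there is no pre-challenge encryption query, it is the outcome measured at the challenge query). Since $f$ is a truly random function, $y^* = f(x^*)$ is a uniformly random string, information-theoretically independent of the public-key copies \emph{up to the point of measurement} — more carefully, conditioned on all of the adversary's view, $y^*$ is uniform except that the adversary could have learned $f(x^*)$ by measuring one of its own copies of $\ket{\qpk}$ and happening to get the same $x^*$; this bad event has negligible probability since $x^*$ is uniform over $\{0,1\}^\lambda$ and the adversary holds only polynomially many copies, each yielding one value of $x$. Conditioned on this bad event not occurring, the challenge ciphertext is $(x^*, \SE.\enc(y^*, m'_b))$ with $y^*$ a fresh uniform SKE key unknown to the adversary, and moreover the $\qdec^*$ oracle in the post-challenge phase refuses exactly the challenge ciphertext $\qc^*$. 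I then reduce to IND-CCA security of the SKE: the reduction embeds its own SKE challenge key as $y^*$, answers the adversary's encryption queries on index $x^*$ using its SKE encryption oracle, answers $\qdec$ queries of the form $(x^*, \cdot)$ using its SKE decryption oracle (and $(x, \cdot)$ for $x \neq x^*$ directly, since it knows $f(x)$), forwards $(m'_0, m'_1)$ as its SKE challenge, and embeds the returned ciphertext as $\SE.\enc(y^*, m'_b)$; the $\qdec^*$ restriction on the qPKE side maps precisely onto the $\dec^*$ restriction on the SKE side. Hence the advantage in Hybrid 2 is negligible, and chaining the hybrids gives the theorem.

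\textbf{Main obstacle.} The delicate point is the independence argument in the step from Hybrid 1 to Hybrid 2: one must argue that the SKE key $y^* = f(x^*)$ is effectively hidden from the adversary despite the adversary holding many copies of $\ket{\qpk} = \sum_x \ket{x, f(x)}$, which jointly encode \emph{all} values of $f$ in superposition. The resolution is that extracting $f(x^*)$ for the \emph{specific} random $x^*$ chosen by the challenger's measurement requires the adversary to "hit" $x^*$, and a standard argument (union bound over the polynomially many copies, each measured or processed to reveal information about at most a negligible fraction of inputs with non-negligible amplitude on $x^*$) bounds this by $\poly(\lambda)/2^\lambda$. Making this precise — ideally phrasing it as: conditioned on the challenger's measurement outcome $x^*$, the adversary's reduced state together with the SKE ciphertext is within negligible trace distance of one where $y^*$ is replaced by an independent uniform string — is where the real work lies, and it is cleanest to fold it directly into the SKE reduction rather than isolate it as a separate hybrid.
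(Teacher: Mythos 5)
Your overall route is the same as the paper's (swap the PRF for a truly random function, argue that the SKE key $y^*=f(x^*)$ is effectively hidden, then reduce to IND-CCA of the SKE), but there are two concrete gaps in the step you yourself flag as the crux and in the final reduction.

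First, the independence argument from Hybrid 1 to Hybrid 2 is not established by the union bound you sketch. You argue that the adversary could only learn $f(x^*)$ by "measuring one of its copies and happening to get $x^*$," with each copy "yielding one value of $x$." A general quantum adversary is not obliged to measure its copies of $\ket{\qpk}=\sum_x\ket{x,f(x)}$ in the computational basis; it can process them coherently, so counting one revealed input per copy is not a valid accounting of what the copies leak, and the "bad event" you condition on is not well defined for such strategies. You correctly state the lemma you actually need (the adversary's state together with the challenge is negligibly close in trace distance to one in which $y^*$ is an independent uniform string), but you do not prove it. The paper's proof closes exactly this hole by an explicit hybrid: each copy of $\ket{\qpk}$ handed to the adversary is replaced by $\ket{\qpk'}\propto\sum_{x\neq x^*}\ket{x,f_\dk(x)}$, which is within negligible trace distance of $\ket{\qpk}$ even across polynomially many copies; once the $x^*$ component is removed, $f(x^*)$ literally does not occur in the adversary's state, so replacing $y^*$ by a uniform string is an exact (syntactic) change rather than a probabilistic one. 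Some such state-replacement (or an equivalent trace-distance argument) is needed; the union bound as written would not survive.

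Second, your final reduction to SKE IND-CCA is not efficient as described. After your Hybrid 1 the game uses a truly random function $f$ on $\{0,1\}^\lambda$ for preparing all public-key copies and for answering decryption queries with $c_0\neq x^*$, and your SKE reduction "knows $f(x)$" for all such $x$ — i.e., it must simulate a random function with superposition access, which a QPT reduction cannot do naively. Since IND-CCA of the SKE is a computational assumption, the reduction must be polynomial-time. The paper handles this by an extra hybrid that reverts to a PRF with a fresh, independent key $\dk'$ (decorrelated from $(x^*,y^*)$) before invoking the SKE, so the final game is efficient; you would need either that step or an appeal to an efficient simulation of quantum-accessible random functions. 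With these two repairs your argument matches the paper's; the only stylistic difference is that you reduce the $m_0'$-vs-$m_1'$ distinction directly to the SKE challenge, whereas the paper compares each branch to an encryption of $0$, and that difference is immaterial.
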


\begin{proof}
  We proceed with a sequence of hybrid games detailed in

  \begin{itemize}
    \item \textbf{Hybrid $H_0$:} This is the IND-CCA game with $\Pi$ with the challenge ciphertext fixed to $(x^*,c^*) = \qenc(\ket{\pk},m_0')$.
    \item \textbf{Hybrid $H_1$:} This is identical to $H_0$ except instead of measuring $\ket{\qpk}$ when the adversary queries the encryption oracle, the challenger measures a copy of $\ket{\qpk}$ in advance to obtain $(x^*,y^*=f_{\dk}(x^*))$ and answers queries to the encryption oracle using $(x^*,y^*)$ instead. The decryption oracle still returns $\bot$ when queried $(x^*,c^*)$. This change is only syntactical so the two hybrids are the same from the adversary's view.
    \end{itemize} 
    The hybrids $H_2$ to $H_5$ have 2 main goals: (i) to decorrelate the encryption/decryption oracles $\qdec^*, \qenc$ from the public-keys handed to the adversary and (ii) to remove the oracles' dependency on $\dk$. 
    \begin{itemize}
    \item \textbf{Hybrid $H_2$:} This is identical to $H_1$, except $(x^*,y^*)$ is removed from the copies of $\ket{\qpk}$ handed to the adversary. More precisely, the adversary is handed $\ket{\qpk'}$ of the following form:
    \begin{equation}
        \ket{\qpk'} = \frac{1}{\sqrt{2^{|x^*|}-1}}\sum_{x:x\neq x'} \ket{x}\ket{f_{\dk}(x)}
    \end{equation}
    The decryption oracle still returns $\bot$ when queried on the challenge ciphertext. Note that $\ket{\qpk}$ and $\ket{\qpk'}$ have $\negl$ trace distance so the advantage of distinguishing $H_1$ and $H_2$ is $\negl$. 
    \item \textbf{Hybrid $H_3$:} This (inefficient) hybrid is identical to $H_2$ other than $f_{\dk}$ being replaced with a truly random function $f$, i.e. the public-keys are change to:
    \begin{equation}
        \ket{\qpk'} = \frac{1}{\sqrt{2^{|x^*|}-1}}\sum_{x:x\neq x'} \ket{x}\ket{f(x)}
    \end{equation} 
    The encryption and decryption oracle can be simulated by oracle access to $f$. The decryption oracle returns $\bot$ when queried $(x^*,c^*)$. The indistinugishability of $H_3$ and $H_2$ follows directly from pseudorandomness property of $\{f_k\}_k$.
    \item\textbf{Hybrid $H_4$:} This hybrid is identical to $H_3$ other than $y^*$ being sampled uniformly at random. Upon quering $(c_0,c_1)$ to the decryption oracle if $c_0 \neq x^*$, the oracle computes $y = f(c_0)$ and returns $m = \SE.\dec(y,c_1)$. In case $c_0 = x^*$ and $c_1\neq c^*$, the decryption oracle  returns $m = \SE.\dec(y^*,c_1)$. On $(x^*,c^*)$ the oracle returns $\bot$. The encryption oracle returns $(x^*,\SE.\enc(y^*,m))$ when queried on $m$. As $x^*$ does not appear in any of the public-keys this change is only syntactical. 
    \item\textbf{Hybrid $H_5$:} This hybrid reverts the changes of $H_3$, i.e. $\dk'$ is sampled uniformly at random and the public-keys are changed as follows:
    \begin{equation}
        \ket{\qpk'} = \frac{1}{\sqrt{2^{|x^*|}-1}}\sum_{x:x\neq x'} \ket{x}\ket{f_{\dk'}(x)}
    \end{equation} 
     With this change, on query $(c_0,c_1)$ if $c_0 \neq x^*$, the decryption oracle computes $y = f_{\dk'}(c_0)$ and returns $m = \SE.\dec(y,c_1)$. In case $c_0 = x^*$, the decryption oracle simply returns $m = \SE.\dec(y^*,c_1)$ when $c_1\neq c^*$ and $\bot$ otherwise. The encryption oracle is unchanged from $H_4$. The indistinguishability of $H_4$ and $H_5$ follows from the pseudorandomness of $f$ and the fact that $\ket{\qpk'}$ and $(x^*,y^*)$ are decorrelated. The hybrid is efficient again. 
     \end{itemize}
     The next step is to remove the dependency of the encryption and decryption oracles on $y^*$. This is done by querying the encryption and decryption oracles of the SKE. 
     \begin{itemize}
     \item\textbf{Hybrid $H_6$:} Let $\SE.\oenc$ and $\SE.\odec^*$ be two oracles implementing the encryption and decryption procedures of $\SE$ with the key $y^*$. $\SE.\odec^*$ returns $\bot$ when queried $y^*$. In this hybrid, we syntactically change the encryption and decryption oracle using these two oracles. To implement the encryption oracle, on query $m$ we simply query $\SE.\oenc$ on message $m$ and return $(x^*, \SE.\oenc(m))$. To simulate the decryption oracle, on query $(c_0,c_1)$ we act the same as in $H_5$ when $c_0\neq x^*$, but on queries of form $(x^*,c)$ we query $\SE.\odec^*$ on $c$ and return $\SE.\odec^*(c)$. Due to the definition of $\oenc$ and $\odec^*$ these changes are also just syntactical. Note that although $\SE.\odec^*$ always returns $\bot$ on $y^*$, it is only queried when $c_0 = x^*$, i.e. to cause this event the decryption oracle should be queried on the challenge ciphertext $(x^*,c^*)$. 

     \item\textbf{Hybrid $H_7$:} We provide the adversary with $x^*, \SE.\oenc,\SE.\odec^*$, instead of access to the encryption and decryption oracle. Note that the adversary can implement the encryption and decryption oracles themselves by having access to $x^*,\SE.\oenc,\SE.\odec^*$ and sampling a uniform $\dk'$ themselves and vice versa ($\SE.\odec^*$ can be queried on $c$ by querying the decryption oracle $(x^*,c)$ and $\SE.\oenc$ can be queried on $m$ by querying the encryption oracle on $m$). This demonstrates that the hybrids are only syntactically different and hence are indistinguishable. 

     \item\textbf{Hybrid $H_8$:} This hybrid is identical to $H_7$ with the only difference that the challenge ciphertext is swapped with $(x^*, \SE.\oenc(0))$. Now notice that any adversary that can distinguish $H_8$ from $H_7$ can effectively break the IND-CCA security of $\SE$. Hence, the indistinguishability of the two hybrids follows directly from the IND-CCA security of $\SE$. 

     Following the same exact hybrids for challenge ciphertext $\qenc(\ket{\qpk}, m_1')$ we can deduce that the scheme is IND-CCA-EO secure. 
    
    \end{itemize}
  \end{proof}

\subsection{CCA1-Secure Many-Bit Encryption from PRFS}
\label{sect:cca_from_prfs}
We continue by presenting a CCA1-secure bit-encryption from PRFS. Extending this scheme to polynomially many bits is discussed at the end of this section, see \cref{rem:length-unrestricted}. The description of the scheme is given below
in~\cref{fig:cca_prfs}.

\begin{construction}[IND-CCA1 secure qPKE from PRFS]
  \label{fig:cca_prfs}
  \begin{itemize}[label=\(\bullet\),itemsep=1ex]
    \item\textbf{Assumptions:} A PRFS family \(\{\ket{\psi_{\dk,x}}\}_{\dk,x}\)
      with super-logarithmic input-size.
      Let \(n \coloneqq n(\secpar)\).

    \item \(\underline{\Gen(\secparam)}\)
      \begin{compactenum}
        \item Output \(\dk \gets_R \{0,1\}^\secpar\).
      \end{compactenum}

    \item \(\underline{\qpkgen(\dk)}\)
      \begin{compactenum}
        \item Output
        \(\ket{\qpk} \gets \sum_{x} \ket{x}_{R} \ket{\psi_{\dk, x}}^{\otimes n}_{S}\), where
        \(x \in \bin^{\omega(\log \secpar)}\).
      \end{compactenum}

    \item \(\underline{\qenc(\ket{\qpk}, m)}\) for \(m \in \bin\)
      \begin{compactenum}
        \item Measure the \(R\) registers of \(\ket{\qpk}\) to obtain a
        classical string $x$.
        Let \(\ket{\phi} \coloneqq \ket{\psi_{\dk, x}}^{\otimes n}\) denote the residual
        state.
        \item If \(m = 0\), output the ciphertext as \((x, \ket{\phi})\).
        \item Else, sample a uniformly random key $\dk_1$, and output the ciphertext as $(x, \ket{\psi_{\dk_1, x}}^{\otimes n})$.\label{line:samp_haar}
      \end{compactenum}

    \item \(\underline{\qdec(\dk, (x, \Psi))}\)
      \begin{compactenum}
        \item Compute \(\ket{\psi_{\dk, x}}^{\otimes n}\) and perform $n$ SWAP tests for each subsystem of $\Psi$ of the same size as $\ket{\psi_{\dk,x}}$ with $\ket{\psi_{\dk,x}}$. 
        \item If the outcome of the SWAP tests is \(0\) all the time, output
        \(0\), otherwise output \(1\).
      \end{compactenum}
  \end{itemize}
\end{construction}

The correctness of the scheme follows from the fact that the states $\ket{\psi_{\dk_1,x}}$ are relatively well spread out for a random choice of $\dk$. This is due to the pseudorandomness of the state generator. Note that if in step~\ref{line:samp_haar} instead of picking $\dk_1$ randomly and computing $\ket{\psi_{\dk_1,x}}$, the encryption algorithm sampled $\ket{\vartheta}^{\ot n}$, from the Haar measure, the expected probability of $n$ SWAP tests between  $\ket{\psi_{x,\dk}}$ and $\ket{\vartheta}$ all returning $0$ would be $2^{-n}$. Hence, if the probability is more than negligibly apart for $n$ SWAP tests between $\ket{\psi_{x,\dk_1}}$ and $\ket{\psi_{x,\dk}}$ for a random choice of $\dk_1$, with a Chernoff bound argument one can show that this would lead to a distinguisher for the PRFS. Hence, for $n$ polynomial in $\lambda$ the scheme has negligible correctness error.
\begin{theorem}
\label{thm:prfs_cca1}
  The construction in \cref{fig:cca_prfs} is IND-CCA1
  secure (see \cref{def:ind-atk}), assuming \(\{\ket{\psi_{\dk,x}}\}_{\dk,x}\)
  is a PRFS with super-logarithmic input-size.
\end{theorem}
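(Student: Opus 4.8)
The plan is to proceed by a sequence of hybrid games, mirroring the structure of the proof of \cref{thm:kem_cca} but adapted to the quantum-ciphertext, single-bit, CCA1 setting. The overall strategy is: (1) replace the PRFS evaluations by Haar-random states using the PRFS security game, so that all structure tied to $\dk$ disappears from everything the adversary sees; (2) argue that, once the states are Haar-random, the challenge ciphertext for $m_0 = 0$ and the challenge ciphertext for $m_1 = 1$ are statistically indistinguishable, because both are (essentially) tensor powers of independent Haar-random states on the same $x$. The key subtlety relative to the OWF proof is that here the adversary receives polynomially many copies of the quantum public key $\ket{\qpk} = \sum_x \ket{x}_R \ket{\psi_{\dk,x}}^{\otimes n}_S$, gets decryption-oracle access $\qdec(\dk,\cdot)$ \emph{before} the challenge (CCA1), and the challenge ciphertext is itself a quantum state.

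First I would set up the following hybrids. $H_0$ is the IND-CCA1 game. In $H_1$, the challenger pre-measures the $R$-register of one fresh copy of $\ket{\qpk}$ to fix the value $x^*$ that the challenge encryption would use, and hands the adversary public keys with the $\ket{x^*}$ branch removed (as in hybrid $H_2$ of the OWF proof); since $\ket{\psi_{\dk,x}}^{\otimes n}$ has norm $1$ and $x^*$ is one of $2^{\omega(\log\secpar)}$ values, the removed branch has amplitude $2^{-\omega(\log\secpar)/2}$, so the trace distance is negligible. Crucially this requires that the adversary's $\qpkgen$ queries and its polynomially many decryption queries can be answered without ever needing the $x^*$ branch — for $\qpkgen$ this is the point just made, and for $\qdec(\dk,\cdot)$ this is fine because $\qdec$ only needs $\dk$ itself, not the public key. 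In $H_2$, I invoke the (adaptive, superposition-query) PRFS security definition: replace the map $x \mapsto \ket{\psi_{\dk,x}}$ everywhere — inside $\qpkgen$, inside the honest-decryption oracle $\qdec(\dk,\cdot)$, and inside the challenge encryption — by $x \mapsto \ket{\phi_x}$ for Haar-random $\ket{\phi_x}$. One has to be a little careful that the decryption oracle $\qdec(\dk,\cdot)$, which internally recomputes $\ket{\psi_{\dk,x}}^{\otimes n}$ and runs SWAP tests, can be implemented as a single algorithm making oracle calls to $x \mapsto \ket{\psi_{\dk,x}}$; since it does, the whole of $H_1$ is a QPT algorithm in the PRFS oracle, so $|\Pr[H_1]-\Pr[H_2]| = \negl$. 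Similarly the "$m=1$" branch of $\qenc$ samples an independent key $\dk_1$ and computes $\ket{\psi_{\dk_1,x^*}}^{\otimes n}$; this is also replaced by an independent Haar-random $\ket{\phi'_{x^*}}^{\otimes n}$ using the same PRFS game (independent key $\Rightarrow$ independent Haar state).

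In $H_2$ the challenge ciphertext is $(x^*, \ket{\phi_{x^*}}^{\otimes n})$ if $b=0$ and $(x^*, \ket{\phi'_{x^*}}^{\otimes n})$ if $b=1$, where $\ket{\phi_{x^*}}$ and $\ket{\phi'_{x^*}}$ are i.i.d. Haar-random; and — this is the point — in $H_2$ the view handed to the adversary no longer contains the $\ket{x^*}$-branch of the public keys at all, so it contains \emph{no information correlated with $\ket{\phi_{x^*}}$}. Hence the $b=0$ and $b=1$ distributions of the entire transcript are identical (both are: adversary's chosen-plaintext-independent view, together with $x^*$ uniform and a fresh tensor power of $n$ copies of a fresh Haar-random state), so the adversary's advantage in $H_2$ is exactly $0$. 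Summing the hybrid gaps, the advantage in $H_0$ is negligible, which is the claim. The main obstacle I expect is the bookkeeping in the PRFS-replacement step: one must phrase $H_1$ genuinely as a single QPT adversary in the PRFS distinguishing game that simultaneously uses the oracle to build all public-key copies \emph{and} to answer every (pre-challenge) $\qdec(\dk,\cdot)$ query \emph{and} to build the challenge ciphertext, and one must make sure the SWAP-test-based decryption really is oracle-implementable and that the two independent Haar replacements (for $\dk$ and for $\dk_1$) can be done with the single PRFS oracle or via a second hybrid — a routine but slightly delicate reduction, together with checking that "super-logarithmic input-size" is exactly what makes the removed amplitude in $H_1$ negligible.
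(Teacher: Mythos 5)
Your proposal is correct and follows essentially the same route as the paper's proof: pre-measure the \(R\) register to fix \(x^*\), hand out public keys with the \(x^*\)-branch removed (negligibly close in trace distance thanks to the super-logarithmic input length), replace the PRFS by Haar-random states everywhere—including inside the black-box, SWAP-test-based decryption oracle—and conclude that the \(b=0\) and \(b=1\) challenge distributions are essentially identical. The only differences are organizational: the paper inserts an extra hybrid that re-keys the public keys and the decryption oracle with an independent PRFS key \(\dk'\) before replacing the challenge state, and it notes explicitly that a pre-challenge decryption query can hit \(x^*\) only with negligible probability, so the advantage in your final hybrid is negligible rather than exactly \(0\) as you state.
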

\begin{proof}
  We prove the theorem via a series of hybrids.
  \begin{itemize}
    \item \textbf{Hybrid \(H_0\).}
      The original security game as defined in~\cref{def:ind-atk}.

    \item \textbf{Hybrid \(H_1\).}
      This is identical to hybrid \(H_0\), except that the challenger, instead
      of measuring \(\ket{\qpk}\) when the adversary queries the encryption
      oracle for the first time, the challenger measures (the \(R\) registers
      of) this state before providing the copies of \(\ket{\qpk}\) to the
      adversary.
      Note that by measuring \(\ket{\qpk}\) in the computational basis, the
      challenger would obtain a classical uniformly random string
      \(x^*\), let the residual state be
      \(\ket{\phi^*} \coloneqq \ket{\psi_{\dk, x^*}}^{\otimes n}\).

      Note that the two operations corresponding to the challenger's measurement
      of \(\ket{\qpk}\) and the creation of the copies of \(\ket{\qpk}\) given
      to the adversary commute.
      Thus, the distribution of the two hybrids are identical and no adversary
      can distinguish \(H_{0}\) from \(H_{1}\) with non-zero advantage.

    \item \textbf{Hybrid \(H_2\).}
      This is identical to hybrid \(H_1\), except that the challenger samples
      \(x^*\) as in the previous hybrid, and
      instead of providing \(\ket{\qpk}\) to the adversary, it provides
\[\ket{\qpk'} \coloneqq \frac{1}{\sqrt{2^{\size{x^*}} -1}}\sum_{x: x \neq x^*}  \ket{x}\ket{\psi_{\dk, x}}^{\otimes n}.\]
      Moreover, in the challenge query, the challenger uses
      \((x^*, \ket{\phi^*})\) for the encryption of the chosen
      message \(m\), without measuring a fresh copy of \(\ket{\qpk}\) (that is,
      it skips the first step of the encryption algorithm).
      We note that this state \(\ket{\qpk'}\) can be efficiently prepared.

      The distinguishing probability of the two hybrids \(H_{1}\) and \(H_{2}\)
      implies that we can distinguish the following quantum states
      \(\ket{\qpk}^{\otimes p} \otimes \ket{x^*}\) and
      \(\ket{\qpk'}^{\otimes p} \otimes \ket{x^*}\)
      with the same probability, but these two quantum states have \(\negl\)
      trace-distance for any polynomial \(p\).
      Therefore, any adversary can only distinguish \(H_{1}\) and \(H_{2}\) with
      success probability at most \(\negl\).

    \item \textbf{Hybrid \(H_{3}\).}
      This (inefficient) hybrid is identical to \(H_{2}\), except that the
      challenger uses a Haar oracle \(\mathcal{O}_{\mathsf{Haar}}\) to generate
      \(\ket{\qpk'}\) in place of \(\ket{\psi_{\dk, \cdot}}\).
      In particular, the quantum public key in the hybrid \(H_{3}\) is computed
      as:
      \begin{equation*}
        \ket{\qpk'} \gets \sum_{x: x \neq x^*} \ket{x} \otimes \ket{\vartheta_{x}}^{\otimes n},
      \end{equation*}
      where each \(\ket{\vartheta_{x}}\) is an output of \(\mathcal{O}_{\mathsf{Haar}}\) on input \(x\).
      The decryption oracle is the same as the decryption algorithm with the difference that $\ooracle_{\mathsf{PRFS}}$ (the algorithm generating the PRFS) is swapped with $\ooracle_{\mathsf{Haar}}$. The crucial point here is that the decryption oracle only uses the PRFS in
      a black-box way (in particular, it only uses \(\mathcal{O}_{\mathsf{PRFS}}\) and
      does not use \(\mathcal{O}_{\mathsf{PRFS}}^{\dagger}\)).
      
      Note that the decryption oracle can return $\bot$ on query $(x^*,\cdot)$. This can not be used to distinguish the two hybrids as the adversary has a negligible chance of querying $x^*$ as $x^*$ is picked uniformly at random. The adversary is only provided with the value of $x^*$ when given the challenge ciphertext, at which point they do not have access to the decryption oracle anymore. 
      
      We note that the adversary does not have direct access to this $\ooracle_\mathsf{Haar}$, but only via the decryption oracle. By pseudorandomness property of \(\ket{\psi_{\dk, \cdot}}\), we have that
      \(H_{2}\) and \(H_{3}\) are computationally indistinguishable.

    \item \textbf{Hybrid \(H_{4}\).}
      In this hybrid, we revert the changes in \(H_{3}\), except that the challenger samples a
      uniformly random key \(\dk'\) to compute all states in $\ket{\qpk'}$, except for the one used to encrypt the challenge query.
      In particular, the public key $\ket{\qpk'}$ is now generated using the PRFS generator with the key $\dk'$, and the secret key $\dk$ and its public counterpart \((x^{*}, \ket{\psi_{\dk, x^{*}}}^{\otimes n})\) are used for the challenge encryption.
      We note that the hybrid is now efficient again.
      Similar to the previous argument, $H_3$ and $H_4$ are also computationally indistinguishable due to pseudorandomness property of $\ket{\psi_{\dk', \cdot}}$.

    \item \textbf{Hybrid $H_5$.}
      This hybrid is identical to $H_4$, except that in the challenge query, instead of encrypting $0$ as $(x^{*}, \ket{\psi_{\dk, x^{*}}}^{\otimes n})$, the challenger encrypts $0$ as $(x^{*}, \ket{\vartheta_{x^{*}}}^{\otimes n})$, where each \(\ket{\vartheta_{x}}\) is an output of \(\mathcal{O}_{\mathsf{Haar}}\) on input \(x\).
      
      Notice that in this hybrid, the secret key $\dk$ and its public counterpart $(x^*, \ket{\psi_{\dk, x^*}})^{\otimes n})$ are not correlated with any of other variables in the hybrid.
      Furthermore, after receiving the challenge ciphertext, the adversary no longer gets access to the decryption oracle. 
      By the pseudorandomness property of $\ket{\psi_{\dk, x^*}}$, we have that $H_4$ and $H_5$ are computationally indistinguishable.

      Furthermore, in this final hybrid, the adversary needs to distinguish the output of PRFS with a uniformly random key $\dk_1$ (for encryption of 1) and the output of a Haar random oracle (for encryption of 0).
      By the same argument as above, the winning advantage of the adversary is also negligible.

  \end{itemize}
  Overall, since all hybrids are negligibly close and the winning advantage of the adversary in the last hybrid in negligible, we conclude the proof.
\end{proof}

\begin{remark}
We sketch here how to achieve many-bit encryption (i.e., non-restricted length encryption) from our scheme present above.
We do this through several steps.
\begin{itemize}
    \item The scheme stated in \cref{fig:cca_prfs} can easily be extended to a length-restricted scheme, by applying bit-by-bit encryption.
    \item Given a qPKE length-restricted CCA1 encryption, and a (non-restricted length) symmetric key encryption, we can define a hybrid encryption scheme, where the qPKE scheme is used first to encrypt a random (fixed length) secret key, which is later used to encrypt an arbitrarily long message. The entire scheme is CPA- (respectively, CCA1-) secure if the symmetric key encryption has CPA- (respectively, CCA1-) security.
    \item Finally, we note that the following many-bit symmetric key encryption scheme can be proven to be CCA1 secure, using the same proof strategy as in~\cref{thm:prfs_cca1}, based on the existence of PRFS alone.
    Given a secret key $\dk$, to encrypt a message $m \in \bin^{\ell}$, we sample $\ell$ distinct uniformly random strings $x_i$, and compute $\ket{\psi_{\dk, x_i}}^{\otimes n}$. Then each bit $m_i$ will be encrypted using as $(x_i, \ket{\psi_{\dk, x_i}}^{\otimes n})$ if $m_i = 0$, or $(x_i, \ket{\psi_{\dk', x_i}}^{\otimes n})$ if $m_i = 1$ for a fresh key $\dk'$.
\end{itemize}
\label{rem:length-unrestricted}
\end{remark} %
\subsection{Generic Construction of Non-Malleable qPKE}
\label{sec:nm-cpa}

We remark that known implications from the literature can be used to show that
IND-CPA secure qPKE \emph{with classical ciphertexts} implies non-malleable
qPKE: The work of~\cite{JC:CDMW18} shows a black-box compiler from IND-CPA
encryption to non-malleable encryption, which also applies to the settings of
quantum public-keys.
The only subtlety is that the compiler assumes the existence of a one-time
signature scheme to sign the ciphertext.
In~\cite{C:MorYam22,cryptoeprint:2022/1336} it is shown that one-time signatures
(with quantum verification keys) exist assuming one-way state generators, which
in turn are implied by qPKE.
Combining the implications of these two works, we obtain a generic construction
of non-malleable qPKE from any IND-CPA secure one.

\section{IND-CPA-EO secure qPKE from PRFSPD}
\label{sect:qpke-from-prfspd}
In this section, we propose a construction for qPKE from pseudo-random
function-like states with proof of destruction. The construction is reusable, has classical ciphers, and is CPA-EO secure.

We first import the following result that builds \emph{symmetric}-key
encryption from PRFSPD.

\begin{proposition}[\cite{cryptoeprint:2023/543}]
  If quantum-secure PRFSPD exists, then there exists a quantum CPA symmetric
  encryption with classical ciphertexts.
\end{proposition}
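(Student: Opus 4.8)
The plan is to instantiate the symmetric scheme directly from the three PRFSPD algorithms $\gen,\destruct,\ver$, reusing the one-bit template of the qPKE-from-PRFSPD construction but without any public-key state. To encrypt a bit $b$ under a secret key $k\in\{0,1\}^{w(\secpar)}$: sample a fresh input $x\sample\{0,1\}^{d(\secpar)}$; if $b=1$ set $y\gets\destruct(\gen(k,x))$; if $b=0$ sample an \emph{independent} fresh key $k_1\sample\{0,1\}^{w(\secpar)}$ and set $y\gets\destruct(\gen(k_1,x))$; output the classical ciphertext $(x,y)$. To decrypt $(x,y)$, output $\ver(k,x,y)$. Longer messages are encrypted bit by bit, each bit with its own fresh $x$ (and $k_1$); since every encryption consumes fresh randomness this already gives many-time security, so by a standard hybrid over the message bits it suffices to analyse single-bit encryption. (I will take the underlying PRFSPD to have input length $d(\secpar)=\secpar$, or pad the input, so that over polynomially many encryptions the sampled inputs are pairwise distinct except with negligible probability.)

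Correctness has two cases. For $b=1$, $\ver(k,x,\destruct(\gen(k,x)))=1$ holds with probability $1$ by PRFSPD correctness. For $b=0$ I need decryption to reject, i.e.\ $\Pr_{k,k_1,x}[\ver(k,x,\destruct(\gen(k_1,x)))=1]=\negl$; otherwise an adversary in the unclonability-of-proofs game of \cref{definition:PRFSPD} that makes no $\gen$-query on a uniformly random $x$, samples $k_1$ itself, and outputs the single string $\destruct(\gen(k_1,x))$ would win with non-negligible probability — it needs only one valid proof and has ``earned'' zero — contradicting unclonability.

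For IND-CPA security I will run two parallel hybrid sequences on the same single-bit challenge game: sequence $A$ begins with the challenge ciphertext an honest encryption of $1$, sequence $B$ an honest encryption of $0$, and I will show both terminate in the \emph{same} distribution. In each sequence the first hop aborts whenever an input used by the encryption oracle collides with the challenge input $x^{*}$; by the birthday bound this costs only a negligible term, and afterwards $x^{*}$ appears only in the challenge. In sequence $A$ the next hop uses PRFSPD pseudorandomness to replace the oracle $\gen(k,\cdot)$ — which the encryption oracle uses on bit-$1$ queries and which the challenge uses — by a Haar oracle $\ooracle_{\mathsf{Haar}}$; the reduction is an efficient PRFSPD distinguisher that simulates the whole CPA game using its given oracle and generates everything else ($\destruct$ evaluations, the fresh keys $k_1$) itself (note $\ver$ is never invoked, a CPA adversary having no decryption oracle). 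The challenge is now $(x^{*},\destruct(\ket{\phi^{x^{*}}}))$ for a Haar-random $\ket{\phi^{x^{*}}}$ that is statistically independent of the adversary's view, since $x^{*}$ occurs nowhere else. In sequence $B$ the challenge already uses an independent fresh key $k_1^{*}$; after the collision-abort hop, one more application of pseudorandomness to $\gen(k_1^{*},\cdot)$ (used only in the challenge) replaces it by a second, independent Haar oracle, so the challenge becomes $(x^{*},\destruct(\ket{\phi'^{x^{*}}}))$ for an independent Haar-random $\ket{\phi'^{x^{*}}}$. The two end states are now identical: in both, the ciphertext is a uniformly random $x^{*}$ together with $\destruct$ applied to a fresh Haar state that is independent of everything else. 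Hence the adversary behaves identically in the two end states, and chaining the negligible/exact hops shows that encryptions of $0$ and of $1$ are computationally indistinguishable even with the encryption oracle present.

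I expect the crux to be the security step, and specifically the fact that $\destruct(\gen(k,x^{*}))$ need not have a distribution close to uniform on $\{0,1\}^{c(\secpar)}$ — nothing in the PRFSPD definition forces this — which is exactly why $b=0$ is encrypted as $\destruct(\gen(k_1,x))$ rather than as a uniformly random string: only with this choice do both branches collapse, via pseudorandomness, to ``$\destruct$ of a fresh Haar state on input $x^{*}$'', making the two end distributions match \emph{exactly} rather than approximately. A secondary subtlety is that pseudorandomness is an oracle-indistinguishability statement for one random key, so the reduction must swap the entire $\gen(k,\cdot)$ oracle at once (and, separately, the entire $\gen(k_1^{*},\cdot)$ oracle) rather than only the state at $x^{*}$, and one must verify — as the collision-abort ensures — that the Haar value on $x^{*}$ is genuinely independent of the adversary's view so that the ``diagonal'' Haar state and a freshly sampled one are interchangeable.
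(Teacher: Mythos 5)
Your construction and the correctness argument (via unclonability-of-proofs, with a zero-query adversary in the cloning game) are sound, and they mirror the one-bit template underlying the paper's own PRFSPD-based scheme; note the paper itself only imports this proposition from \cite{cryptoeprint:2023/543} without proof, so the natural comparison is with the hybrid proof of \cref{lem:security-prfspd-scheme}. The security argument, however, has a genuine gap at its final step: the two end hybrids are \emph{not} identical. In sequence $A$ you replace the entire oracle $\gen(k,\cdot)$ by a Haar oracle, so in $A$'s end state the encryption oracle answers bit-$1$ queries with $\destruct$ applied to Haar states; in sequence $B$ you only replace $\gen(k_1^{*},\cdot)$, so there the bit-$1$ oracle answers are still $\destruct(\gen(k,x_i))$. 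The adversary's view includes these oracle answers, so the claim ``the two end states are now identical'' is false as written --- only the challenge-ciphertext marginal coincides --- and the concluding ``exact match'' does not go through. The repair is routine and uses nothing beyond your own toolkit: also replace $\gen(k,\cdot)$ by a Haar oracle in sequence $B$ (or, equivalently, revert it in sequence $A$ after the challenge has been decoupled), at the cost of one more pseudorandomness hop; since, after the collision abort, $x^{*}$ is never queried to that shared Haar oracle, the Haar state at $x^{*}$ is fresh and independent in both sequences and the full views then genuinely coincide. This symmetric replacement is exactly what the paper's proof of \cref{lem:security-prfspd-scheme} does: it switches \emph{both} the $\dk_{0,i}$-generators and the $\dk_{1,i}$-generators to Haar oracles before arguing that $y_i$ and $y'_i$ are identically distributed, and only afterwards reverts.

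A secondary point: the parenthetical ``or pad the input'' is not justified. A PRFSPD with short input length cannot be generically padded to a longer-input one --- ignoring padding bits breaks pseudorandomness, since the Haar oracle returns independent states on inputs differing only in the padding. You should instead assume, as the paper does for its own PRFSPD-based theorem, that the PRFSPD has super-logarithmic input size $d(\secpar)=\omega(\log \secpar)$; this is precisely what makes your collision-abort hop cost only a negligible term. With these two repairs the proof is correct.
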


We give the formal construction for many-bit reusable encryption scheme from PRFSPD in~\cref{fig:public_key_encryption_from_prfspd}.

\begin{construction}[IND-CPA-EO secure qPKE from PRFSPD]
  \label{fig:public_key_encryption_from_prfspd}
  \begin{itemize}[label=\(\bullet\),itemsep=1ex]
    \item \textbf{Assumptions:} A PRFSPD family \(\{\ket{\psi_{\dk,x}}\}_{\dk,x}\)
      and a quantum symmetric encryption scheme with classical ciphers
      \(\{\enc,\dec\}\).
    \item \(\underline{\Gen(\secparam)}\)
      \begin{compactenum}
        \item Let \(\dk_{0, i} \gets_R \{0,1\}^\secpar\) for all
        \(i \in [1, \secpar]\).
        \item Output \(\dk \gets \{\dk_{0, i}\}_{i \in [1, \secpar]}\).
      \end{compactenum}

    \item \(\underline{\qpkgen(\dk)}\)
      \begin{compactenum}
        \item Output
        \( \ket{\qpk}=\bigotimes_{i \in [\lambda]} \frac{1}{\sqrt{2^\secpar}} \sum_{x_{i} \in \{0,1\}^\secpar}\ket{x_{i}}\ket{\psi_{\dk_{0, i},x_{i}}}\).
      \end{compactenum}

    \item \(\underline{\qenc(\ket{\qpk},m)}\) for \(m \in\{0,1\}^*\)
      \begin{compactenum}
        \item Let
        \(\ket{\qpk_i} \coloneqq \frac{1}{\sqrt{2^\secpar}} \sum_{x_{i} \in \{0,1\}^\secpar}\ket{x_{i}}\ket{\psi_{\dk_{0, i},x_{i}}}\),
        and write \(\ket{\qpk}\) as
        \(\ket{\qpk}=\bigotimes_{i \in [\lambda]} \ket{\qpk_i}\).
        \item Measure the left registers of \(\ket{\qpk_i}\) to obtain classical
        strings \(x_{i}\).
        Denote the post-measurement states as \(\ket{\psi'_i}\).
        \item Set \(y_{i}\gets \destruct(\ket{\psi'_{i}})\).
        \item For \(i \in [1, \secpar]\), pick \(\dk_{1, i} \gets \{0,1\}^\lambda\) and
        compute $\ket{\psi_{\dk_{1, i}, x_{i}}}$.
        \item Set \(y'_{i}\gets \destruct(\ket{\psi_{\dk_{1, i}, x_{i}}})\) for all
        $i \in [\secpar]$.
        \item Pick a uniformly random key $k \gets \bin^\secpar$.
        \item Set \(\tilde{y}_{i} = \begin{cases}y'_{i}&, \text{if
          } k_i = 0 \\ y_{i}&, \text{if } k_i = 1 \end{cases}\).
        \item Output
        \(\left(\enc(k, m), \left( (x_{i},\tilde{y}_{i})\right)_i\right)\) as
        ciphertext and \(\left(k, \left( (x_{i},\tilde{y}_{i})\right)_i\right)\)
        as the recycled public-key.
      \end{compactenum}

    \item \(\underline{\qdec(\dk, c)}\)
      \begin{compactenum}
        \item Interpret \(c\) as
        \(\left(c', \left( (x_{i},\tilde{y}_{i})\right)_i\right)\).
        \item Let \(k'_i = \ver(\dk_{0, i}, x_{i},\tilde{y}_{i})\) and let
        \(k' = k'_{0} \ldots k'_{\secpar}\).
        \item Output \(\dec(k', c')\).
      \end{compactenum}
  \end{itemize}
\end{construction}

The correctness of our scheme relies on the existence of PRFSPD with pseudorandomness and unclonability of proofs properties. 
The proof of correctness can be shown similarly to that of~\cref{fig:cca_prfs}.
Next, we show that this construction achieves IND-CPA-EO security in~\cref{lem:security-prfspd-scheme}.

\begin{theorem}\label{lem:security-prfspd-scheme}
  If quantum-secure PRFSPD with super-logarithmic input size exists, then there
  exists a public-key encryption with classical ciphertexts which is IND-CPA-EO
  secure.
\end{theorem}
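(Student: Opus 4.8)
The plan is to prove IND-CPA-EO security of \cref{fig:public_key_encryption_from_prfspd} via a sequence of hybrids that mirrors the structure of the proofs of \cref{thm:kem_cca} and \cref{thm:prfs_cca1}, reducing at the end to the IND-CPA security of the underlying symmetric scheme $\{\enc,\dec\}$. First I would, as in hybrid $H_1$ of \cref{thm:kem_cca}, make the syntactic change that the challenger measures the left registers of each $\ket{\qpk_i}$ \emph{in advance} (before handing out copies of $\ket{\qpk}$), obtaining classical strings $x^*_1,\dots,x^*_\secpar$ and post-measurement states $\ket{\psi_{\dk_{0,i},x^*_i}}$; this commutes with copying $\ket{\qpk}$ and so is perfectly indistinguishable. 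Then, as in $H_2$, I would remove the measured branch $x^*_i$ from all copies of $\ket{\qpk}$ handed to the adversary, replacing each $\ket{\qpk_i}$ by the state supported on $x\neq x^*_i$; this is $\negl$-close in trace distance over polynomially many copies, and the encryption-oracle queries and the challenge query are answered using the pre-measured strings and residual states.

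Next I would decorrelate the public keys from the secret keys $\dk_{0,i}$ by switching, one coordinate at a time, from $\ket{\psi_{\dk_{0,i},\cdot}}$ to a Haar oracle $\ooracle_{\mathsf{Haar}}$ (hybrid analogous to $H_3$), using the pseudorandomness of PRFSPD; crucially the only use of $\dk_{0,i}$ outside the challenge and the $\destruct$ calls is through $\qpkgen$, which uses $\gen$ in a black-box (no-$\gen^\dagger$) way, so this swap is sound. At this point the public keys no longer depend on $\dk$. Now I would observe that the encryption oracle $\ooracle_1$ (for CPA, $\ooracle_2=\varnothing$) only outputs, per message $m$, the tuple $\bigl(\enc(k,m),((x_i,\tilde y_i))_i\bigr)$ where $k$ is freshly sampled each call and the $(x_i,\tilde y_i)$ are the fixed destruction strings — note that $k$ is sampled \emph{inside} $\qenc$, so each oracle call uses an independent key and the EO transcript leaks nothing about the challenge key. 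I would then revert the Haar oracle back to PRFSPD keyed by a \emph{fresh} $\dk'_{0,i}$ for the non-challenge part (analogue of $H_4$), making the hybrid efficient again, and finally replace, in the challenge ciphertext, each $y_i=\destruct(\ket{\psi_{\dk_{0,i},x^*_i}})$ by $y'_i=\destruct(\ket{\psi_{\dk_{1,i},x^*_i}})$ for the \emph{other} fresh key (so that $\tilde y_i$ becomes independent of the selector bits $k_i$, using pseudorandomness of PRFSPD applied to the $\ket{\psi_{\dk_{0,i},x^*_i}}$ branch, which is no longer needed anywhere else). After this last switch the challenge key $k$ is information-theoretically hidden from the adversary except through $\enc(k,m'_b)$, so an adversary distinguishing $b=0$ from $b=1$ directly breaks IND-CPA security of $\{\enc,\dec\}$, which exists by the imported \cref{cryptoeprint:2023/543} proposition; this closes the chain.

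For the reusability/EO statement I would additionally track the recycled public key $\bigl(k,((x_i,\tilde y_i))_i\bigr)$ produced by $\qenc$: subsequent honest encryptions reuse the same classical data, and since by that point the relevant hybrids have already replaced the $\destruct$ outputs by quantities independent of $\dk$, correctness and security of the reused key follow from the same argument — essentially the recycled key is a classical symmetric key plus classical destruction certificates, so multi-message encryption reduces to multi-message IND-CPA of $\{\enc,\dec\}$ by a standard hybrid over the messages. The main obstacle I expect is the careful bookkeeping around the $\destruct$ algorithm: unlike a PRF evaluation, $\destruct$ consumes the state $\ket{\psi_{\dk_{0,i},x^*_i}}$, so I must ensure that in every hybrid this state is only ever used once and that the pseudorandomness swaps are applied to the \emph{state} $\ket{\psi_{\dk_{0,i},x^*_i}}$ (with the right number of copies — here a single copy per coordinate) rather than to oracle access; getting the order of these swaps right so that at no point does any hybrid need both $\dk_{0,i}$ in a public key and a $\destruct$-image of the $\dk_{0,i}$-state is the delicate part. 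The unclonability-of-proofs property of PRFSPD, incidentally, is not needed for this security proof — it is only used for correctness — which I would note explicitly.
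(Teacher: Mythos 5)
Your hybrid sequence is essentially the paper's own proof: pre-measuring the left registers, pruning the measured branch $x^*_i$ from the distributed copies, switching each coordinate of the public key from $\ket{\psi_{\dk_{0,i},\cdot}}$ to Haar and back with fresh keys, making $\tilde y_i$ independent of the selector bits $k_i$, and finally reducing to the symmetric scheme; your shortcut of replacing $y_i$ directly by a fresh-key proof of destruction (rather than routing both branches through Haar and then setting $\tilde y_i = y_i$, as the paper does) is a harmless variant, and your remark that unclonability-of-proofs is only needed for correctness matches the paper. The one point to fix is your parenthetical claim that ``each oracle call uses an independent key'' $k$: in the IND-CPA-EO game the oracle is fed the \emph{recycled} key $(k,((x_i,\tilde y_i))_i)$, so the challenge ciphertext and every encryption-oracle response are encryptions under the \emph{same} symmetric key $k$, and the final step must therefore invoke the SKE's IND-CPA security with an encryption oracle under that single key (as your last paragraph in fact does) rather than argue that $k$ is hidden except through $\enc(k,m'_b)$. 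With that inconsistency removed, the argument coincides with the paper's.
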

\begin{proof}
Our construction is given in \cref{fig:public_key_encryption_from_prfspd}. It uses a PRFSPD family \(\{\ket{\psi_{\dk,x}}\}_{\dk,x}\) and a quantum symmetric encryption scheme with classical ciphers \(\{\enc,\dec\}\).
We prove the security of our scheme through a series of hybrids.

\begin{itemize}
  \item \textbf{Hybrid \(H_0\).}
    The original security game as defined in~\cref{def:ind-eo}.

  \item \textbf{Hybrid \(H_1\).}
    This is identical to hybrid \(H_0\), except that the challenger, instead of
    measuring \(\ket{\qpk_{i}}\) (for all \(i \in [\secpar]\)) when the adversary
    queries the encryption oracle for the first time, the challenger measures
    the left register of each \(\ket{\qpk_{i}}\) before providing the copies of
    \(\ket{\qpk}\) to the adversary.
    Note that by measuring \(\ket{\qpk_{i}}\) in the computational basis, the
    challenger would obtain a classical uniformly random string \(x^*_{i}\).

    Note that the two operations corresponding to the challenger's measurement
    of \(\ket{\qpk}\) and the creation of the copies of \(\ket{\qpk}\) given to
    the adversary commute.
    Thus, the distribution of the two hybrids are identical and no adversary can
    distinguish \(H_{0}\) from \(H_{1}\) with non-zero advantage.

  \item \textbf{Hybrid \(H_2\).}
    This is identical to hybrid \(H_1\), except that the challenger samples
    \(x^*_{i}\) as in the previous hybrid, and instead of providing
    \(\ket{\qpk}\) to the adversary, it provides
    \[\ket{\qpk'} \coloneqq \bigotimes_{i \in [\secpar]} \frac{1}{\sqrt{2^{\size{x^*_{i}}} -1}}\sum_{x_{i}: x_{i} \neq x^*_{i}} \ket{x_{i}}\ket{\psi_{\dk_{0, i}, x_{i}}}.\]
    Moreover, in the challenge query, the challenger uses
    \((x^*_{i}, \ket{\psi_{\dk_{0, i}, x^*_{i}}})\) for all \(i \in [\secpar]\) for
    the encryption of the chosen message \(m\), without measuring a fresh copy
    of \(\ket{\qpk}\) (that is, it skips the first step of the encryption
    algorithm).
    We note that this state \(\ket{\qpk'}\) can be efficiently prepared.

    The distinguishing probability of the two hybrids \(H_{1}\) and \(H_{2}\)
    implies that we can distinguish the following quantum states
    \(\ket{\qpk}^{\otimes p} \otimes \bigotimes_{i \in [\secpar]} \ket{x^*_{i}}\) and
    \(\ket{\qpk'}^{\otimes p} \otimes \bigotimes_{i \in [\secpar]} \ket{x^*_{i}}\) with
    the same probability, but these two quantum states have \(\negl\)
    trace-distance for any polynomial \(p\).
    Therefore, any adversary can only distinguish \(H_{1}\) and \(H_{2}\) with
    success probability at most \(\negl\).

  \item \textbf{Hybrid \(H_{2, i}\) for \(i \in [0, \secpar]\).}
    We define a series of (inefficient) hybrids \(H_{2, i}\), in which
    \(H_{2, 0} \coloneqq H_{2}\), and we denote
    \(H_{2, \secpar} \coloneqq H_{3}\).
    Each \(H_{2, i+1}\) is identical as \(H_{2, i}\), except that the challenger
    uses a Haar oracle \(\mathcal{O}_{\mathsf{Haar}_{i}}\) in place of
    \(\ket{\psi_{\dk_{0, i}, \cdot}}\).
    In particular, the quantum public key in the hybrid \(H_{2, i}\) is computed
    as:
      \begin{equation*}
        \ket{\qpk'} \gets \bigotimes_{j = 1}^{i} \sum_{x_{j}: x_{j} \neq x^*_{j}} \ket{x_{j}} \otimes \ket{\vartheta_{x_{j}}} \otimes \bigotimes_{j = i+1}^{\secpar} \sum_{x_{j}: x_{j} \neq x^*_{j}} \ket{x_{j}}\ket{\psi_{\dk_{0, j}, x_{j}}},
      \end{equation*}
    where each \(\ket{\vartheta_{x_{j}}}\) is an output of
    \(\mathcal{O}_{\mathsf{Haar}_{j}}\) on input \(x_{j}\).
    For the challenge encryption query, the challenger uses
    \((x^*_{j}, \ket{\vartheta_{x^*_{j}}})\) for all \(j \in [1, i]\), and
    \((x^*_{j}, \ket{\psi_{\dk_{0, j}, x^*_{j}}})\) for all \(j \in [i+1, \secpar]\).

    By pseudorandomness property of \(\ket{\psi_{\dk_{0, i}, \cdot}}\), we have that
    \(H_{2, i}\) and \(H_{2, i+1}\) are computationally indistinguishable.

  \item \textbf{Hybrid \(H_{3, i}\) for \(i \in [0, \secpar]\).}
    We define a series of (inefficient) hybrids \(H_{3, i}\), in which
    \(H_{3, 0} \coloneqq H_{3}\), and we denote
    \(H_{3, \secpar} \coloneqq H_{4}\).
    In each \(H_{3, i+1}\), we revert the changes in \(H_{3, i}\), except that
    the challenger samples uniformly random keys \(\dk'_{i}\) to compute the
    \(i\)-the component in $\ket{\qpk'}$, except for the one used to encrypt the
    challenge query.

    Similar to the previous argument, $H_{3, i+1}$ and $H_{3, i}$ are also
    computationally indistinguishable due to pseudorandomness property of
    $\ket{\psi_{\dk'_{i}, \cdot}}$.

  \item \textbf{Hybrid \(H_{4, i}\) for \(i \in [0, \secpar]\).}
    We define a series of (inefficient) hybrids \(H_{4, i}\), in which
    \(H_{4, 0} \coloneqq H_{4}\), and we denote
    \(H_{4, \secpar} \coloneqq H_{5}\).

    Each hybrid \(H_{4, i}\) is identical to \(H_{4, i+1}\), except that for the
    challenge encryption, the challenger does not sample \(\dk_{1, i}\) and
    compute \(\ket{\psi_{\dk_{1, i}, x^{*}_{i}}}\).
    Instead, the challenger generates \(\ket{\vartheta_{x^{*}_{i}}}\) using a
    Haar random oracle \(\ooracle_{\mathsf{Haar}_{i}}\) and uses this state to
    compute \(y'_{i}\) (by applying \(\destruct\) to
    \(\ket{\vartheta_{x^{*}_{i}}}\)).

    By the pseudorandomness of \(\ket{\psi_{\dk_{1, i}, \cdot}}\), \(H_{4, i}\) and
    \(H_{4, i+1}\) are computationally indistinguishable.

    \item \textbf{Hybrid \(H_{6}\).}
    This hybrid is identical to \(H_{5}\), except that now the challenger sets
    \(\tilde{y}_{i} = y_{i}\) for all \(i\) for the challenge encryption
    query.

    Note that in this hybrid, both \(y_{i}\) and \(y'_{i}\) are computed by
    applying \(\destruct\) to a Haar random state, thus they are output of the
    same distribution. Therefore, \(H_{5}\) and \(H_{6}\) are identical.

  \item \textbf{Hybrid \(H_{6, i}\) for \(i \in [0, \secpar]\).}
    We define a series of hybrids \(H_{6, i}\), in which
    \(H_{6, 0} \coloneqq H_{6}\), and we denote
    \(H_{6, \secpar} \coloneqq H_{7}\).

    Each hybrid \(H_{6, i+1}\) is identical to \(H_{6, i}\), except now instead
    of using a Haar random oracle in encryption of the challenge query, the
    challenger samples a fresh key \(\dk_{i}\) and uses this key to compute
    \(\tilde{y}_{i}\) which is a proof of destruction of the state
    \(\ket{\psi_{\dk_{i}, x^{*}_{i}}}\).

    By pseudorandomness of \(\ket{\psi_{\dk_{i}, \cdot}}\), \(H_{6, i+1}\) and
    \(H_{6, i}\) are computationally indistinguishable.

    We also note that the hybrid \(H_{7}\) is now efficient again.
    In this final hybrid, we note that the secret key \(k\) of the symmetric key
    encryption scheme is uniformly random and independent from all the other
    variables in the hybrid.
    Thus, we can easily reduce the winning probability of the adversary in this
    hybrid to the security of the symmetric key encryption scheme, which is
    negligible.
  \end{itemize}
  Overall, we obtain the winning probability of the adversary in the first
  hybrid \(H_{0}\) is negligible, and conclude the proof.

\end{proof}

\section{Impossibility of Unconditionally Secure qPKE}
\label{sec:impossibility}

In the following, we investigate the question on whether qPKE is possible to construct with information-theoretic security, and we give strong bounds against this. First, let us mention that a recent work by Morimae et al.~\cite{cryptoeprint:2022/1336} shows that an object called quantum pseudo-one-time pad (QPOTP) implies the existence of efficiently samplable, statistically far but computationally indistinguishable pairs of (mixed) quantum states (EFI pairs). QPOTP is a one-time symmetric encryption with quantum ciphertexts and classical keys, whose key length is shorter than the message length. qPKE immediately implies the existence of QPOTP, by increasing the message length, using bit-by-bit encryption. Since EFI pairs cannot exist information-theoretically, this chain of implications rules out the existence of unconditionally secure qPKE.\footnote{This observation was pointed out to us by Takashi Yamakawa.}

For the sake of completeness, we provide a new and direct proof of the impossibility statement using a shadow tomography argument.

\paragraph{A Proof from Shadow Tomography.}
In order to prove our impossibility result, we first show that if two public-keys $\ket\qpk$ and $\ket{\qpk^*}$ are close, if we encrypt a random bit using $\ket{\qpk^*}$, the probability of decrypting correctly with $\dk$ is high, where $\dk$ is the corresponding secret-key of $\ket{\qpk}$.

\begin{lemma}\label{lemma:pk_distance}
Let $\lambda$ be the security parameter and $\Gamma = (\Gen,\qpkgen,\qenc,\qdec)$ be a qPKE. Let $\dk^*,\ket{\qpk}^*$ be a fixed pair of honestly generated keys and for all decryption keys $\dk$ define $p_{\dk}$ to be:
 \begin{align*}
    p_{\dk} = \Pr\left[
\qdec(\dk, \qc) = \pt
\ \middle\vert
\begin{array}{r}
\pt \xleftarrow{\$} \{0,1\} \\
(\qc,\cdot) \gets \qenc(\qpk^* , \pt) 
\end{array}
\right]
\end{align*}

and let $\ket{\qpk_{\dk}} \gets \qpkgen(\dk)$. For all $\dk$, if $\abs{\langle \qpk^*
\vert \qpk_{\dk} \rangle} \geq 1- \epsilon$, then $p_{\dk}\geq 1-\sqrt{3\epsilon}$. 

\end{lemma}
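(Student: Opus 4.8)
Fix a decryption key $\dk$ with $\abs{\langle \qpk^* \vert \qpk_{\dk}\rangle} \ge 1-\epsilon$, where $\ket{\qpk_{\dk}} \gets \qpkgen(\dk)$. The plan is to view the whole ``encrypt a uniformly random bit and try to decrypt it with $\dk$'' experiment as a \emph{single} two-outcome measurement acting only on the public-key register, and then to transfer the success probability from $\ket{\qpk_{\dk}}$ (where correctness applies) to $\ket{\qpk^*}$ using that the two states are close. Concretely, let $M$ be the operator ($0 \le M \le I$ as operators) corresponding to the ``accept'' branch of the following quantum instrument applied to an input state $\sigma$ on the public-key register: sample $\pt \xleftarrow{\$}\{0,1\}$; run $(\qpk',\qc) \gets \qenc(\sigma,\pt)$ and discard $\qpk'$; run $m' \gets \qdec(\dk,\qc)$; output ``accept'' iff $m' = \pt$. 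Since $\qenc$ and $\qdec$ (together with their internal randomness and ancillas) are quantum channels, this composition is a legitimate quantum instrument, so $M$ is well-defined. By construction $p_{\dk} = \langle \qpk^*\vert M \vert\qpk^*\rangle$, and writing $P_{\dk} \coloneqq \langle \qpk_{\dk}\vert M \vert\qpk_{\dk}\rangle$, the correctness of the scheme applied to the honestly generated key $\ket{\qpk_{\dk}}$ (averaged over the two choices of $\pt$) gives $P_{\dk} \ge 1 - \negl$.

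Next I would bound the trace distance between the two public keys. Since both are pure, $\tfrac{1}{2}\bigl\lVert\, \ketbra{\qpk^*} - \ketbra{\qpk_{\dk}}\,\bigr\rVert_1 = \sqrt{1 - \abs{\langle \qpk^*\vert\qpk_{\dk}\rangle}^2} \le \sqrt{1 - (1-\epsilon)^2} = \sqrt{2\epsilon - \epsilon^2} \le \sqrt{2\epsilon}$; note that this quantity depends only on $\abs{\langle \qpk^*\vert\qpk_{\dk}\rangle}$, so the global phase is irrelevant. Because $0 \le M \le I$ and POVM statistics are $1$-Lipschitz in trace distance, $\abs{p_{\dk} - P_{\dk}} \le \sqrt{2\epsilon}$. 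Combining the two bounds, $p_{\dk} \ge P_{\dk} - \sqrt{2\epsilon} \ge 1 - \negl - \sqrt{2\epsilon}$. Since $\sqrt{3\epsilon} - \sqrt{2\epsilon} = (\sqrt{3}-\sqrt{2})\sqrt{\epsilon}$ dominates any negligible function whenever $\epsilon$ is at least inverse-polynomial (the regime in which the lemma is invoked — and for smaller $\epsilon$ one simply keeps the sharper bound $1 - \negl - \sqrt{2\epsilon}$), the negligible correctness slack is absorbed into the slightly larger constant and we conclude $p_{\dk} \ge 1 - \sqrt{3\epsilon}$.

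The only real content is the first step: recognising that ``$\qenc(\cdot,\pt)$, then $\qdec(\dk,\cdot)$, then compare with $\pt$'' is a \emph{fixed} measurement on the public-key register, so that closeness of public keys immediately transfers to closeness of decryption-success probabilities. Everything else is the standard pure-state trace-distance identity and the contractivity of measurement statistics under trace distance, so I do not expect a serious obstacle. The two points that require care are: (i) $\qdec$ here is decryption under the key $\dk$ under consideration, not under $\dk^*$; and (ii) correctness is applied to $\ket{\qpk_{\dk}}$ — the honestly generated public key \emph{matching} $\dk$ — rather than to $\ket{\qpk^*}$, which is precisely why the argument needs the closeness hypothesis.
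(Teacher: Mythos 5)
Your proposal is correct and follows essentially the same route as the paper: both arguments transfer the near-perfect (by correctness) decryption success probability from the honest pair $(\dk,\ket{\qpk_{\dk}})$ to $\ket{\qpk^*}$ using the pure-state overlap-to-trace-distance bound, arriving at the same estimate $p_{\dk}\ge 1-\negl-\sqrt{2\epsilon}\ge 1-\sqrt{3\epsilon}$. The only difference is packaging: the paper compares the purified ciphertexts $\tilde{\qc}_b,\tilde{\qc}_b'$ for each bit $b$ and then applies the decryption POVM $\{\Pi^b_{\dk}\}$, whereas you fold encryption, decryption, and the comparison with $\pt$ into a single effect $M$ on the public-key register and invoke contractivity of measurement statistics once, which is a slightly cleaner presentation of the same argument.
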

\begin{proof}
Let $U_{\qenc}$ be the purified implementation of the encryption procedures, i.e. given the state $\ket{\qpk^*}\ket{b}\ket{0}$, $U_{\qenc}$ computes the state computed by $\qenc$ prior to the measurement. We argue that for any $\ket{\qpk_{\dk}}$ which is close to $\ket{\qpk^*}$, the purified ciphertexts generated by the two keys are also close. For any bit $b$, the purified ciphertext are defined as $\Tilde{\qc_b}=U_{\qenc}\ket{\qpk^*}\ket{b}\ket{0}\bra{0}\bra{b}\bra{\qpk^*}U_{\qenc}^\dagger$ and
$\Tilde{\qc_b}'=U_{\qenc}\ket{\qpk_{\dk}}\ket{b}\ket{0}\bra{0}\bra{b}\bra{\qpk_{\dk}}U_{\qenc}^\dagger$. We refer to these as purified ciphertexts. 
Now we can show, 

\begin{align}
    \trace(\Tilde{\qc_b}\Tilde{{\qc_b}}'^\dagger) &= \trace(U_{\qenc}\langle\qpk^* \vert \qpk_{\dk}\rangle \ket{\qpk^*}\bra{\qpk_{\dk}}U_{\qenc}^\dagger) \label{eq:trace1}\\ 
    &= \abs{\langle\qpk^* \vert \qpk_{\dk}\rangle}^2 \geq (1- \epsilon)^2\label{eq:trace2}
\end{align}
The transition from \Cref{eq:trace1} to \Cref{eq:trace2} follows from the trace-preserving property of unitaries. Let $\{\Pi^b_{\dk}\}_{\dk}$ be the POVM corresponding to decrypting a purified ciphertext with key $\dk$, i.e. the probability of a purified ciphertext $\qc$ being decrypted to $b$ by $\dk$ is given by $\trace(\Pi^b_{\dk}\qc)$. Now the term $p_{\dk}$ can be rewritten as follows:
\begin{align}
    p_{\dk} = \frac{1}{2}[\trace(\Pi^0_\dk\tilde{\qc}_0) + \trace(\Pi^1_\dk\tilde{\qc}_1)]
\end{align}

Now note that, $\trace(\Pi^0_\dk\qc_0')=\trace(\Pi^1_\dk\qc_1')=1-\negl$ as we assumed $\Gamma$ has negligible correctness error.  Now we can bound $p_{\dk}$ as follows, 
\begin{align}
    p_{\dk} &= \frac{1}{2}[\trace(\Pi^0_\dk\tilde{\qc}_0) + \trace(\Pi^1_\dk\tilde{\qc}_1)] \\
    &\geq 1-\negl - \frac{1}{2}[\trace(\vert\Pi^0_\dk(\tilde{\qc}_0-\tilde{\qc}_0')\vert) + \trace(\vert\Pi^1_\dk(\tilde{\qc}_1-\tilde{\qc}_1')\vert)]\\
    &\geq 1-\negl - \frac{1}{2}[\trace(\vert\tilde{\qc}_0-\tilde{\qc}_0'\vert) + \trace(\vert\tilde{\qc}_1-\tilde{\qc}_1'\vert)]\label{eq:ineq1}\\
    &=  1-\negl -\frac12 [\sqrt{1-\trace(\Tilde{\qc_0}\Tilde{{\qc_0}'}^\dagger)} + \sqrt{1-\trace(\Tilde{\qc_1}\Tilde{{\qc_1}'}^\dagger)}]\label{eq:ineq2}\\ 
    &\geq 1-\negl -  \sqrt{2\epsilon}\geq 1-\sqrt{3\epsilon}
\end{align}
The transition from \Cref{eq:ineq1} to \Cref{eq:ineq2} is due to $\tilde{\qc}_b$ and $\tilde{\qc}'_b$ being pure states. This concludes the proof of the lemma.

\end{proof}

Given \cref{lemma:pk_distance} one can reduce the adversary's task in the IND-CPA game to finding a decryption key $\dk$ such that the state
  $\ket{\qpk_{\dk}} \gets \qpkgen (\dk)$ is close to $\ket{\qpk^*}$ in inner product
  distance.
 The main technique we use to realize this subroutine of the adversary is shadow tomography
  introduced by Aaronson et al.~\cite{STOC:Aaronson18}.
  At the core of our proof is the following theorem by Huang, Kueng, and
  Preskill~\cite{huang2020predicting}.

\begin{theorem}[Theorem~1 and S16 \cite{huang2020predicting}]\label{thm:shadow_tmg}
Let $O_1,\dots,O_M$ be $M$ fixed observables and let $\rho$ be an unknown $n$-qubit state. Given $T = O(\log(M/\delta)/\epsilon^2\times \max_i \trace(O_i^2))$ copies of $\rho$, there exists a quantum algorithm that performs measurements in random Clifford basis on each copy and outputs $\tilde{p}_1,\dots,\tilde{p}_M$ such that, with probability at least $1-\delta$
$$\forall i , |\tilde{p}_i-\trace(O_i\rho)|\leq \epsilon$$
\end{theorem}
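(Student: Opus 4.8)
We would only cite this result from~\cite{huang2020predicting}, but for completeness let me sketch how one proves it with the \emph{classical shadows} technique. The plan has three parts: build an unbiased one-shot estimator from a random Clifford measurement, bound its variance, and then boost the success probability by median-of-means.

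First I would set up the estimator. For each of the $T$ copies of $\rho$, sample a uniformly random $n$-qubit Clifford $U$, apply it, and measure in the computational basis to obtain $b\in\{0,1\}^{n}$; record the classical \emph{snapshot} $\hat{\rho} \coloneqq (2^{n}+1)\,U^{\dagger}\ketbra{b}U - \ident$. The structural fact behind this is that the measurement channel of the global Clifford ensemble, $\mathcal{M}(X) \coloneqq \E_{U}\sum_{b}\bra{b}U X U^{\dagger}\ket{b}\;U^{\dagger}\ketbra{b}U$, equals $\tfrac{1}{2^{n}+1}\bigl(X + \tr(X)\,\ident\bigr)$; this uses only that the Clifford group is a unitary $2$-design, via $\E_{U}(U^{\dagger}\ketbra{0}U)^{\otimes 2} = \tfrac{1}{2^{n}(2^{n}+1)}(\ident + \mathrm{SWAP})$. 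Since $\hat{\rho} = \mathcal{M}^{-1}(U^{\dagger}\ketbra{b}U)$ with $\mathcal{M}^{-1}(X) = (2^{n}+1)X - \tr(X)\ident$, we obtain $\E[\hat{\rho}] = \rho$, so each $\hat{o}_{i} \coloneqq \tr(O_{i}\hat{\rho})$ is an unbiased estimator of $\tr(O_{i}\rho)$.

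Second, and this is the step I expect to be the main obstacle, I would bound $\operatorname{Var}[\hat{o}_{i}]$. Writing $O_{i,0} \coloneqq O_{i} - \tfrac{\tr(O_{i})}{2^{n}}\ident$ for the traceless part (passing from $O_{i}$ to $O_{i,0}$ only shifts $\hat{o}_{i}$ by the deterministic constant $\tr(O_{i})/2^{n}$, since $\tr\hat{\rho} = 1$), the target bound is $\operatorname{Var}[\hat{o}_{i}] \le 3\,\tr(O_{i,0}^{2}) \le 3\,\tr(O_{i}^{2})$. Establishing it requires a third-moment computation for the Clifford ensemble --- one genuinely uses that the multi-qubit Clifford group is a unitary $3$-design --- in order to evaluate $\E_{U}\sum_{b}\bra{b}U\rho U^{\dagger}\ket{b}\,\bigl(\bra{b}U O_{i,0} U^{\dagger}\ket{b}\bigr)^{2}$; expanding with the $3$-design (Weingarten-type) identity and using $\rho \preceq \ident$ yields the clean constant $3$.

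Third, I would boost to the stated parameters by median-of-means. Partition the $T$ snapshots into $K = \Theta(\log(M/\delta))$ groups of size $N = \Theta(\max_{i}\tr(O_{i}^{2})/\epsilon^{2})$, so that $NK = \Theta\bigl(\log(M/\delta)\cdot\max_{i}\tr(O_{i}^{2})/\epsilon^{2}\bigr) = T$. For each $i$, average $\hat{o}_{i}$ within each group; by Chebyshev and the variance bound, each group average is within $\epsilon$ of $\tr(O_{i}\rho)$ with probability at least $2/3$, hence the median over the $K$ groups deviates by more than $\epsilon$ with probability at most $\delta/M$ by a Chernoff bound, and a union bound over $i\in[M]$ gives total failure probability at most $\delta$. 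Finally I would note that for rank-$1$ projective measurements $O_{i} = \ketbra{\psi_{i}}$ one has $\tr(O_{i}^{2})=1$, which recovers the bound $T = O(\log(M/\delta)/\epsilon^{2})$ used above to estimate the overlaps of $\ket{\qpk^{*}}$ with all candidate public keys $\ket{\qpk_{\dk}}$.
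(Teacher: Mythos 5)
The paper does not prove this statement at all---it is imported verbatim from Huang--Kueng--Preskill (Theorem~1 and S16 of \cite{huang2020predicting})---so your decision to cite it is exactly the paper's approach. Your completeness sketch (unbiased snapshot $\hat{\rho}=(2^{n}+1)U^{\dagger}\ketbra{b}U-\ident$ from inverting the 2-design measurement channel, the variance bound $3\tr(O_{i,0}^{2})$ via the Clifford 3-design property, median-of-means with a union bound over the $M$ observables, and the observation that $\tr(O_i^2)=1$ for the rank-1 projectors actually used in the impossibility proof) is an accurate summary of how the cited result is established.
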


At a high level, the theorem states that outcomes of polynomially many random Clifford measurements on a state, i.e. a polynomial number of classical shadows, are enough to reconstruct an estimate of the statistics obtained by measuring an exponential number of observables. Note that, the post-processing required to reconstruct $\tilde{p}_i$ values is often inefficient, however for our purpose, i.e. proving the impossibility of an information-theoretically secure quantum PKE the efficiency of the procedure is not of concern.
Using \cref{thm:shadow_tmg} we are able to prove the impossibility statement.

\begin{theorem}
For any security parameter $\lambda$ and qPKE $\Gamma = (\Gen, \qpkgen,\qenc,\qdec)$ there exists a polynomial $m$ and a computationally unbounded adversary $\adv$ who can win the IND-CPA game with significant advantage only given $m(\secpar)$ copies of the public-key. 
\end{theorem}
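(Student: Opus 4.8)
The plan is to build the computationally unbounded adversary $\adv$ by combining the classical-shadows estimator of \Cref{thm:shadow_tmg} with the decryption guarantee of \Cref{lemma:pk_distance}, following the outline of the technical overview. Throughout I would fix an honestly generated pair $(\dk^*,\ket{\qpk^*})$, where $\dk^*\gets\Gen(\secparam)$ and $\ket{\qpk^*}\gets\qpkgen(\dk^*)$, argue for each such fixed pair, and average over $\dk^*$ only at the very end. By \Cref{def:eqpk}, $\qpkgen(\dk)$ always outputs the same pure state, which I denote $\ket{\qpk_\dk}$; in particular every copy $\adv$ obtains from its $\qpkgen(\dk^*)$ oracle equals $\ket{\qpk^*}=\ket{\qpk_{\dk^*}}$. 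If $\ell=\ell(\secpar)=\poly$ denotes the bit-length of $\dk$, there are at most $M:=2^{\ell}$ candidate decryption keys, and for each I would define the fixed rank-one observable $O_\dk:=\ketbra{\qpk_\dk}$ (so $\trace(O_\dk^2)=1$), which an unbounded party can write down explicitly from the public description of $\qpkgen$.

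The adversary then runs in three steps, making no queries other than the public-key-oracle queries below. First, using its oracle access in the IND-CPA game (\Cref{game:ind-eo} or \Cref{game:ind-atk} with $\mathsf{ATK}=\mathsf{CPA}$), it requests $T$ copies of $\ket{\qpk^*}$, where $T$ is the copy count from \Cref{thm:shadow_tmg} for the observables $\{O_\dk\}_\dk$ with accuracy $\epsilon$ and failure probability $\delta$ both set to a small constant, say $\epsilon=\delta=1/100$; since $\max_\dk\trace(O_\dk^2)=1$ and $M=2^\ell$, this gives $T=O(\log(M/\delta)/\epsilon^2)=O(\ell)=\poly$, and I set $m(\secpar):=T$. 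Second, it runs the random-Clifford procedure of \Cref{thm:shadow_tmg} on these copies, obtaining with probability $\ge 1-\delta$ estimates $\tilde p_\dk$ with $|\tilde p_\dk-\trace(O_\dk\ketbra{\qpk^*})|\le\epsilon$ for all $\dk$ simultaneously; since $\ket{\qpk^*}$ is pure, $\trace(O_\dk\ketbra{\qpk^*})=|\langle\qpk_\dk\vert\qpk^*\rangle|^2$. Third, it picks $\hat\dk$ attaining the largest $\tilde p_\dk$. Because $\dk^*$ itself has $|\langle\qpk_{\dk^*}\vert\qpk^*\rangle|^2=1$, we get $\tilde p_{\hat\dk}\ge\tilde p_{\dk^*}\ge 1-\epsilon$, hence $|\langle\qpk_{\hat\dk}\vert\qpk^*\rangle|^2\ge 1-2\epsilon$, and therefore $|\langle\qpk_{\hat\dk}\vert\qpk^*\rangle|\ge\sqrt{1-2\epsilon}\ge 1-2\epsilon$.

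In the challenge phase, $\adv$ submits the single-bit messages $m_0=0$ and $m_1=1$ (permissible since correctness is required for all $m\in\{0,1\}^*$), receives $\qc^*\gets\qenc(\ket{\qpk^*},b)$ for the hidden bit $b$, and outputs $b'\gets\qdec(\hat\dk,\qc^*)$. Invoking \Cref{lemma:pk_distance} on $\ket{\qpk_{\hat\dk}}$ with slack parameter $2\epsilon$, the key $\hat\dk$ correctly decrypts a random-bit encryption under $\ket{\qpk^*}$ with probability $p_{\hat\dk}\ge 1-\sqrt{3\cdot 2\epsilon}=1-\sqrt{6\epsilon}$, which by the definition of $p_{\hat\dk}$ is exactly $\Pr[b'=b\mid\text{estimation succeeded}]$. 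Averaging over the estimation-success event and over $\dk^*\gets\Gen(\secparam)$, $\adv$ wins with probability at least $(1-\delta)(1-\sqrt{6\epsilon})$, which for $\epsilon=\delta=1/100$ is a constant strictly above $1/2$; thus $m(\secpar)=\poly$ copies of the public key suffice.

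The step I expect to be the real obstacle is the second one: extracting some $\hat\dk$ whose state $\ket{\qpk_{\hat\dk}}$ is close to $\ket{\qpk^*}$ amounts to estimating the overlaps $|\langle\qpk_\dk\vert\qpk^*\rangle|$ for exponentially many $\dk$, and the obvious approach (a fresh batch of SWAP tests against each $\ket{\qpk_\dk}$) would consume exponentially many copies of $\ket{\qpk^*}$. The whole point of \Cref{thm:shadow_tmg} is that $\poly$ copies suffice for all $M=2^\ell$ overlaps at once; the inefficiency of its post-processing is harmless here because $\adv$ is computationally unbounded. The remaining ingredients---that $\qpkgen$ yields a fixed pure state, so the target $\ket{\qpk^*}$ is well-defined and identical across all copies (precisely the reason \Cref{def:eqpk} insists on pure states), and the passage from \emph{a close key} to \emph{correct decryption}---are supplied by \Cref{def:eqpk} and \Cref{lemma:pk_distance} respectively, so no further work is needed there.
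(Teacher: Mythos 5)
Your proposal is correct and follows the same two-ingredient strategy as the paper: reduce winning IND-CPA to finding a key $\dk$ whose public key has near-unit overlap with $\ket{\qpk^*}$ (\cref{lemma:pk_distance}), and find such a key unboundedly via the Huang--Kueng--Preskill classical-shadows bound (\cref{thm:shadow_tmg}). The only substantive difference is in how the tomography step is instantiated: the paper takes the observables to be $N$-fold tensor-power projectors $\ketbra{\qpk_\dk}^{\otimes N}$ measured on $\ketbra{\qpk^*}^{\otimes N}$ and accepts any key whose estimate exceeds $1/2$, using the amplification $(1-1/\log N)^{2N}\le e^{-2N/\log N}$ to make the good/bad separation crisp and to push the success probability toward $1$ as the constant $N$ grows (at the cost of $m=NT$ copies); you instead use single-copy rank-one observables with $\epsilon=\delta=1/100$ and select the argmax, which avoids tensor powers, uses fewer copies, and still yields a constant winning probability around $0.75$, which suffices for the theorem as stated. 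One small point to tighten: your candidate set should be the keys in the support of $\Gen(\secparam)$ (as in the paper's ensemble $\{\dk \mid \dk\gets\Gen(1^\lambda)\}$) rather than all $2^{\ell}$ bitstrings, since \cref{lemma:pk_distance} relies on the scheme's correctness for the pair $(\dk,\ket{\qpk_\dk})$, which is only guaranteed for honestly generated keys; an unbounded adversary can of course restrict to this support, so this is a one-line fix.
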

\begin{remark}
Actually our attack allows us to recover the secret key with high probability, and thus the attack also breaks the one-wayness security of qPKE (which is a weaker security notion than IND-CPA). 
Thus, our theorem indeed shows a generic impossibility of unconditionally secure qPKE.
\end{remark}
\begin{proof}
   Let us describe the adversary given $m$ copies of the public-key $\ket{\qpk^*}$ alongside a challenge ciphertext $\qc$. We set the value of $m$ later in the proof. For a value $N$, we define the following rank 1 projection ensemble $\{\Pi^1_{\dk} = \ket{\qpk_{\dk}}\bra{\qpk_{\dk}}^{\otimes N} \}_{\dk \gets \Gen(1^\lambda)}$.
   The adversary tries to find a decryption key $\dk$ such that $\trace(\Pi^1_{\dk} \ketbra{\qpk^*}^{\ot N})$ is relatively large. 
   In order to do so the adversary computes $\trace(\Pi^1_{\dk} \ketbra{\qpk^*}^{\ot N})$ for all decryption keys $\dk$.
   By following the procedure from \cref{thm:shadow_tmg} on $\rho =\ket{\qpk^*}\bra{\qpk^*}^{\ot N}$, the adversary performs random Clifford measurements on \[T = O\left(\log\left(\frac{\#\{\dk|\dk\gets \Gen(1^\lambda)\}}{\delta}\right)\frac{1}{\epsilon^2} \trace({{\Pi_\dk^1}}^2)\right)\] copies of $\rho$ to compute values $\tilde{p}_\dk$ such that with probability $1-\delta$, for all $\dk$  \[\abs{\tilde{p}_\dk - \trace(\Pi^1_{\dk} \ketbra{\qpk^*}^{\ot N})} \leq \epsilon.\] Let us set $\epsilon < 1/6$ and $\delta$ to be a small constant, e.g. $1/100$. Immediately it can be noticed that as $\epsilon$ and $\delta$ are constants and ${\trace({\Pi_{\dk}}^1}^2)=1$\footnote{this is due to $\Pi^1_\dk$ operators being rank-1 projections}, $T$ is $O(\log(\#\{\dk|\dk \gets \gen(1^\secpar)\}))$ which is $\poly$ as the key-lengths should be polynomial in the security parameter.
   
   We claim that if the adversary picks any key such that $\tilde{p}_{\dk}>1/2$, they have found a key that has a high chance of decrypting the challenge ciphertext correctly. Let us elaborate. First of all, note that the adversary finds at least one such $\dk$ with probability at least $1-\frac{1}{100}$, as for the correct decryption key $\dk^*$, $\trace(\Pi^1_{\dk^*} \ketbra{\qpk^*}^{\ot N}) = 1$ hence $\tilde{p}_{\dk^*} > 1-1/6$ with probability at least $1-\frac{1}{100}$. 

    The next thing to show is that any $\dk$ such that $\tilde{p}_{\dk} > 1/2$ is a \emph{good} decryption key. Note that due to \cref{lemma:pk_distance} we have,
    \begin{align}
        \trace(\Pi^1_{\dk} \ketbra{\qpk^*}^{\ot N}) = \abs{\langle\qpk_\dk \vert \qpk^* \rangle}^{2N}
    \end{align}
    We note that for all $\dk$ such that $p_{\dk} \leq 1-\sqrt{\frac{3}{\log(N)}}$ we have: 
    \begin{align}
        p_{\dk} &\leq 1-\sqrt{\frac{3}{\log(N)}} \Rightarrow \langle\qpk_\dk \vert \qpk^* \rangle \leq  1- \frac{1}{\log(N)} \\ 
        &\Rightarrow \trace(\Pi^1_{\dk} \ketbra{\qpk^*}^{\ot N}) \leq (1-\frac{1}{\log(N)})^{2N}\\
        &\leq e^{-2N/\log(N)} \ll 1/3 \text{,  for a large enough }N
    \end{align}

Given our choice of $\delta,\epsilon$, this ensures that if the adversary picks any $\dk$ such that $\tilde{p}_\dk> 1/2$, with probability at least $1-\frac{1}{100}$ we have that, $\abs{\tilde{p}_{\dk} - \trace(\Pi^1_{\dk} \ketbra{\qpk^*}^{\ot N})} \leq 1/6$, $\trace(\Pi^1_{\dk} \ketbra{\qpk^*}^{\ot N})> 1/3$ hence, $p_{\dk}> 1-\sqrt{\frac{3}{\log(N)}}$. 

As the last step, the adversary uses the $\dk$ they obtain from the previous procedure to decrypt the challenge ciphertext $\qc^*$. By union bound and following the discussion above the adversary's advantage to decrypt the challenge ciphertext correctly is greater than $1-\frac{1}{100}-\sqrt{\frac{3}{\log(N)}}$ which by setting $N$ to be a large constant is significantly larger than $1/2$. Finally note that this adversary uses $m= NT$ copies of the public-key, where $T=\poly$ and $N$ is a constant, so the total number of public-key copies used are polynomial in $\secpar$.
\end{proof}
 
\ifanon
\section*{Acknowledgments}
The authors wish to thank Takashi Yamakawa for pointing out a simple argument to rule out the existence of information-theoretically secure qPKE. The argument is replicated here with his permission.
\else
\section*{Acknowledgments}
The authors wish to thank Prabhanjan Ananth and Umesh Vazirani for related discussions, and Takashi Yamakawa for pointing out a simple argument to rule out the existence of information-theoretically secure qPKE. The argument is replicated here with his permission.

ABG and QHV are supported by ANR JCJC TCS-NISQ ANR-22-CE47-0004, and by the PEPR
integrated project EPiQ ANR-22-PETQ-0007 part of Plan France 2030.
GM was partially funded by the German Federal Ministry of Education and Research
(BMBF) in the course of the 6GEM research hub under grant number 16KISK038 and
by the Deutsche Forschungsgemeinschaft (DFG, German Research Foundation) under
Germany's Excellence Strategy - EXC 2092 CASA – 390781972.
OS was supported by the Israeli Science Foundation (ISF) grant No.
682/18 and 2137/19, and by the Cyber Security Research Center at Ben-Gurion
University.
KB and LH are supported by the Swiss National Science Foundation (SNSF) through
the project grant 192364 on Post Quantum Cryptography.

\BeforeBeginEnvironment{wrapfigure}{\setlength{\intextsep}{0pt}}
\begin{wrapfigure}{r}{90px}
  \includegraphics[width=40px]{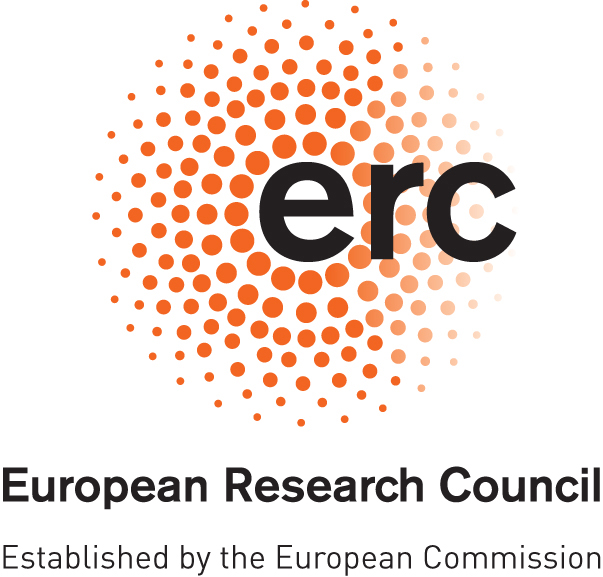}
  \includegraphics[width=40px]{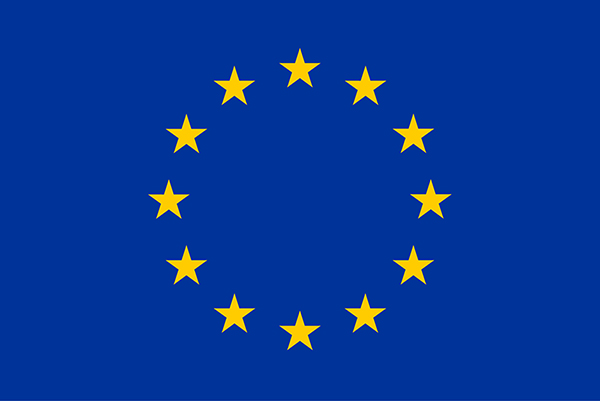}
\end{wrapfigure}
OS has received funding from the European Research Council (ERC) under the
European Union’s Horizon 2020 research and innovation programme (grant agreement
No 756482).
MW acknowledges support by the the European Union (ERC, SYMOPTIC, 101040907), by
the Deutsche Forschungsgemeinschaft (DFG, German Research Foundation) under
Germany's Excellence Strategy - EXC\ 2092\ CASA - 390781972, by the BMBF through
project QuBRA, and by the Dutch Research Council (NWO grant OCENW.KLEIN.267).
Views and opinions expressed are those of the author(s) only and do not
necessarily reflect those of the European Union or the European Research Council
Executive Agency.
Neither the European Union nor the granting authority can be held responsible
for them.

\fi
\bibliographystyle{alpha}
\renewcommand{\doi}[1]{\url{#1}}
\bibliography{./cryptobib/abbrev3,./cryptobib/crypto,main}

\appendix
\crefalias{section}{appendix}
\section{CCA-Secure Bit-Encryption from OWF}
\label{app:CCA_bit_encryption}

In this appendix, we describe a simple quantum public key bit encryption scheme that satisfies the strong notion of CCA security.
The construction relies on a quantum-secure pseudorandom function
\[
  \PRF\colon \{0,1\}^\lambda \times \{0,1\}^\lambda \to \{0,1\}^{3\lambda}
\]
which, as mentioned earlier in \cref{sec:definitions}, can be constructed from
any quantum-secure one-way function.
Then our quantum PKE scheme~$\Pi=(\Gen,\qpkgen,\qenc,\qdec)$ is defined as
follows:
\begin{itemize}
\item The key generation algorithm $\Gen(1^\lambda)$ samples two keys $\dk_0 \xleftarrow{\$} \{0,1\}^\lambda$ and $\dk_1 \xleftarrow{\$} \{0,1\}^\lambda$ and sets $\dk = (\dk_0,\dk_1)$. The public-key generation $\qpkgen(\dk)$ prepares the states
\begin{align*}
  \ket{\qpk_0} = \sum_{x\in \{0,1\}^{\lambda}} \ket{x,f_{\dk_0}(x)}
\quad\text{and}\quad
  \ket{\qpk_1} = \sum_{x\in \{0,1\}^{\lambda}} \ket{x,f_{\dk_1}(x)}.
\end{align*}
    Where $\{f_{\dk}\}_{\dk}$ is a $\PRF$.
    Note that both states are efficiently computable since the $\PRF$ can be
    efficiently evaluated in superposition in view of \cref{thm:quantum}.
    The quantum public key is then given by the pure state
    $\ket\qpk = \ket{\qpk_0} \ot \ket{\qpk_1}$, whereas the classical secret key
    consists of the pair $\dk = (\dk_0,\dk_1)$.

\item Given a message $\pt \in \{0,1\}$, the encryption algorithm $\qenc(\ket\qpk, \pt)$ simply measures $\ket{\qpk_{\pt}}$ in the computational basis, and outputs the measurement outcome as the \emph{classical} ciphertext $\qc=(x,y)$ and the post measurement state $\ket{x}\ket{y}$.

\item Given the ciphertext $\qc=(x,y)$, the decryption algorithm $\qdec(\dk, \qc)$ first checks whether $f_{\dk_0}(x) = y$ and returns $0$ if this is the case.
Next, it checks whether $f_{\dk_1}(x) = y$ and returns $1$ in this case.
Finally, if neither is the case, the decryption algorithm returns $\bot$.
\end{itemize}

Next, we establish correctness of this scheme.

\begin{theorem}\label{thm:correctness}
If $\PRF$ is a quantum-secure pseudorandom function, then the quantum PKE scheme $\Pi$ is correct.
\end{theorem}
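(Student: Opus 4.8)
The plan is to recognize that $\Pi$ is \emph{not} perfectly correct: decryption can err, but only on a single ``collision'' event, and bounding its probability reduces to PRF security. First I would dispose of the message $\pt=0$. Measuring the (normalized) state $\ket{\qpk_0}=\frac{1}{\sqrt{2^\secpar}}\sum_{x\in\{0,1\}^\secpar}\ket{x,f_{\dk_0}(x)}$ in the computational basis always produces a ciphertext $\qc=(x,y)$ with $x$ uniform and $y=f_{\dk_0}(x)$. Hence the first test performed by $\qdec(\dk,\cdot)$, namely ``$f_{\dk_0}(x)=y$?'', always passes, and $\qdec$ returns $0=\pt$. So correctness is in fact \emph{perfect} on the message $0$, with no appeal to pseudorandomness.

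For $\pt=1$, measuring $\ket{\qpk_1}$ yields $\qc=(x,y)$ with $x$ uniform and $y=f_{\dk_1}(x)$. Running $\qdec(\dk,\qc)$, the correct value $1$ is returned unless the \emph{first} test already succeeds, i.e.\ unless $f_{\dk_0}(x)=y=f_{\dk_1}(x)$, in which case $\qdec$ wrongly outputs $0$. Therefore the decryption-error probability on $\pt=1$ is at most
\[
p_0 \;\coloneqq\; \Pr_{\substack{x \xleftarrow{\$}\{0,1\}^\secpar\\ \dk_0,\dk_1 \xleftarrow{\$}\{0,1\}^\secpar}}\bigl[\, f_{\dk_0}(x)=f_{\dk_1}(x)\,\bigr],
\]
and it remains to show $p_0=\negl$. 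This is precisely the step that uses pseudorandomness: a candidate ``\PRF'' that were constant in its input would make the two values always coincide, so the assumption is genuinely needed here.

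To bound $p_0$ I would run a two-step hybrid. Let $p_1$ be the same probability with $f_{\dk_0}$ replaced by a truly random function, and $p_2$ the same with \emph{both} $f_{\dk_0}$ and $f_{\dk_1}$ replaced by independent truly random functions. A distinguisher that samples $x$ and $\dk_1$ internally, makes a single classical query to its oracle $g$ at $x$, and outputs $1$ iff $g(x)=f_{\dk_1}(x)$ has advantage exactly $|p_0-p_1|$, so $|p_0-p_1|\le\negl$ by PRF security; note it only queries classically, so even classical PRF security would suffice (quantum security being stronger). Likewise, a distinguisher that samples $x$ and a fresh uniform $r\xleftarrow{\$}\{0,1\}^{3\secpar}$ (which faithfully models a random function evaluated at the single point $x$), queries its oracle $g$ at $x$, and outputs $1$ iff $g(x)=r$ has advantage $|p_1-p_2|$, so $|p_1-p_2|\le\negl$. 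Finally $p_2=2^{-3\secpar}$, since two independent uniform $3\secpar$-bit strings collide with precisely that probability. Combining, $p_0\le 2^{-3\secpar}+\negl=\negl$, and together with the $\pt=0$ case this yields $\Pr[\qdec(\dk,\qenc(\ket{\qpk},\pt))=\pt]\ge 1-\negl$.

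The step I expect to need the most care is the second hybrid: one must check that the value of a truly random function at its single queried point is distributed as a fresh uniform $3\secpar$-bit string independent of the rest of the experiment, so that the corresponding reduction is a legitimate PRF distinguisher making only one oracle query. Everything else — the normalization of $\ket{\qpk_b}$, the claim that measurement produces a uniformly random $x$ together with the deterministic $y=f_{\dk_b}(x)$, and the collating of the two message cases — is routine bookkeeping.
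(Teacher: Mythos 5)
Your proof is correct, and it shares the paper's core strategy — use \PRF{} security to replace the keyed functions by truly random ones, then finish with an elementary collision bound — but the decomposition is slightly different and worth noting. The paper argues globally: it observes that the scheme is \emph{perfectly} correct whenever the ranges of $f_{\dk_0}$ and $f_{\dk_1}$ are disjoint, switches both functions to truly random ones "by a standard argument," and bounds the probability of a range intersection by a union bound ($2^\secpar$ range points, each landing in a set of size at most $2^\secpar$ inside $\{0,1\}^{3\secpar}$, giving roughly $2^{-\secpar}$). You instead isolate the exact error event: encryption of $0$ can never fail, and encryption of $1$ fails only on the pointwise collision $f_{\dk_0}(x)=f_{\dk_1}(x)$ at the single measured $x$, which after your two explicit one-query hybrids is $2^{-3\secpar}+\negl$. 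Your route buys a sharper bound and fully spelled-out reductions (including the correct observation that classical single-query access suffices, so quantum PRF security is more than enough, and that the random function's value at one point is a fresh uniform string); the paper's route buys a stronger structural fact — that for all but a negligible fraction of key pairs decryption is unambiguous on \emph{every} ciphertext, not just honestly generated ones — at the cost of a slightly looser bound. Either argument establishes the theorem as stated, and your only claimed delicate step (independence of the random function's value at the queried point) is indeed handled correctly.
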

\begin{proof}
Observe that the scheme is perfectly correct if the ranges of $f_{\dk_0}$ and $f_{\dk_1}$ are disjoint.
By a standard argument, we can instead analyze the case of two truly random functions~$f_0$ and~$f_1$, and the same will hold for~$f_{\dk_0}$ and~$f_{\dk_1}$, except on a negligible fraction of the inputs.
Fix the range of~$f_0$, which is of size at most~$2^\lambda$.
Then the probability that any given element of~$f_1$ falls into the same set is at most $2^{-2\lambda}$, and the desired statement follows by a union bound.
\end{proof}

Finally, we show that the scheme is CCA-secure. The main tool used in the proof is the one-way to hiding lemma~\cite{C:AmbHamUnr19}.

\begin{lemma}[One-way to hiding]\label{lemma:o2h}
  Let $G,H: X\to Y$ be random functions and $S \subset X$ an arbitrary set with the
  condition that $\forall x\notin S, G(x) = H(x)$, and let $z$ be a random bitstring.
  Further, let $\adv^H(z)$ be a quantum oracle algorithm that queries $H$ with
  depth at most $d$.
  Define $\bdv^H(z)$ to be an algorithm that picks $i\in[d]$ uniformly, runs
  $\adv^H(z)$ until just before its $i^{th}$ round of queries to $H$ and
  measures all query input registers in the computational basis and collects
  them in a set $T$.
  Let
\begin{align*}
  P_{\text{left}} = \Pr[ 1\gets \adv^H(z)], \quad
  P_{\text{right}} = \Pr[1\gets \adv^G(z)], \\
  P_{\text{guess}} = \Pr[S\cap T \neq \emptyset | T\gets \bdv^H(z)]
\end{align*}
Then we have that
\begin{align}
    |P_{\text{left}} - P_{\text{right}}| \leq 2d \sqrt{P_{\text{guess}}} \quad\text{and}\quad |\sqrt{P_{\text{left}}} - \sqrt{P_{\text{right}}}| \leq 2d \sqrt{P_{\text{guess}}}
\end{align}
\end{lemma}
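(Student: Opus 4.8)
The plan is to reproduce the standard proof of the one-way-to-hiding lemma of Ambainis, Hamburg, and Unruh~\cite{C:AmbHamUnr19} by a hybrid argument over the $d$ query rounds. I would interpolate between the executions $\adv^H(z)$ and $\adv^G(z)$ by switching the answered oracle from $H$ to $G$ one round at a time, and use the hypothesis that $G$ and $H$ agree on $X\setminus S$ to show that switching round $i$ perturbs the global state only in proportion to the weight that the $i$-th query places on inputs in $S$.

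Concretely, for $i\in[d+1]$ let $B_i$ run $\adv$ answering query rounds $1,\dots,i-1$ with $H$ and rounds $i,\dots,d$ with $G$, so that $B_{d+1}=\adv^H(z)$ and $B_1=\adv^G(z)$. Let $\ket{\phi_i}$ be the state of $\adv^H(z)$ immediately before its $i$-th round of queries; since $\adv$'s internal unitaries do not depend on the oracle, $\ket{\phi_i}$ is also the state of $B_i$ and of $B_{i+1}$ immediately before round $i$. Letting $O_H,O_G$ be the (possibly parallel) round-$i$ query unitaries and $\Pi_S$ the projector onto the event "at least one round-$i$ query input lies in $S$", the agreement off $S$ gives $(O_H-O_G)(\id-\Pi_S)=0$, hence $\|(O_H-O_G)\ket{\phi_i}\|=\|(O_H-O_G)\Pi_S\ket{\phi_i}\|\le 2\|\Pi_S\ket{\phi_i}\|$ using $\|O_H\|=\|O_G\|=1$. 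The remaining ($G$-answered) rounds and the final processing form a fixed isometry applied identically in $B_i$ and $B_{i+1}$, so their final states are at Euclidean distance at most $2\|\Pi_S\ket{\phi_i}\|$. Using the standard facts that for normalized pure states $\ket{u},\ket{v}$ one has $|\Pr[\ket{u}\to 1]-\Pr[\ket{v}\to 1]|\le\|\ket{u}-\ket{v}\|$ (via trace distance) and $\bigl|\sqrt{\Pr[\ket{u}\to 1]}-\sqrt{\Pr[\ket{v}\to 1]}\bigr|\le\|\ket{u}-\ket{v}\|$ (via $\bigl|\|\Pi\ket{u}\|-\|\Pi\ket{v}\|\bigr|\le\|\Pi(\ket{u}-\ket{v})\|$ for the output projector $\Pi$), each of $|\Pr[B_i\to 1]-\Pr[B_{i+1}\to 1]|$ and the corresponding square-root difference is at most $2\|\Pi_S\ket{\phi_i}\|$.

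Then I would telescope over $i=1,\dots,d$ and apply Cauchy--Schwarz: $|P_{\text{left}}-P_{\text{right}}|\le 2\sum_{i=1}^{d}\|\Pi_S\ket{\phi_i}\|\le 2\sqrt{d}\,\bigl(\sum_{i=1}^{d}\|\Pi_S\ket{\phi_i}\|^2\bigr)^{1/2}$, and identically for the square-root quantity. Finally, by the Born rule $\|\Pi_S\ket{\phi_i}\|^2$ is exactly the probability that running $\adv^H(z)$ and measuring the round-$i$ query inputs produces some value in $S$; averaging over the uniform choice $i\in[d]$ made by $\bdv^H$ gives $P_{\text{guess}}=\tfrac1d\sum_{i=1}^{d}\|\Pi_S\ket{\phi_i}\|^2$, i.e.\ $\sum_i\|\Pi_S\ket{\phi_i}\|^2=d\cdot P_{\text{guess}}$. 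Substituting yields $|P_{\text{left}}-P_{\text{right}}|\le 2\sqrt{d}\sqrt{d\,P_{\text{guess}}}=2d\sqrt{P_{\text{guess}}}$, and the same bound for $|\sqrt{P_{\text{left}}}-\sqrt{P_{\text{right}}}|$.

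I expect the main obstacle to be the bookkeeping at the parallel-query level and, relatedly, aligning the measurement performed by $\bdv$ with the hybrid decomposition: $\bdv^H$ answers \emph{all} rounds up to the $i$-th with $H$ and then measures, so the hybrids must also answer rounds $1,\dots,i-1$ with $H$ (rather than $G$) --- this is precisely what makes $\ket{\phi_i}$ coincide across $B_i$, $B_{i+1}$ and $\bdv^H$, so that the per-round perturbation bound and the definition of $P_{\text{guess}}$ refer to the same quantity $\|\Pi_S\ket{\phi_i}\|^2$. A secondary point requiring care is keeping the constant $2$ (rather than $4$) by routing the probability-difference bound through trace distance instead of a cruder norm estimate.
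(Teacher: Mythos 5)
This lemma is not proven in the paper at all: it is imported verbatim from Ambainis--Hamburg--Unruh~\cite{C:AmbHamUnr19} (the paper only cites it as ``the main tool''), so there is no in-paper proof to compare against. Your hybrid-over-query-rounds argument is the standard direct proof of the original O2H bound, and it is essentially sound: the hybrids $B_i$ are set up correctly (so that the pre-round-$i$ state coincides with the state $\bdv^H$ measures), the per-round perturbation $\|(O_H-O_G)\ket{\phi_i}\|\le 2\|\Pi_S\ket{\phi_i}\|$ is right, the trace-distance and reverse-triangle-inequality steps give the constant $2$, and Cauchy--Schwarz plus the Born-rule identity $\sum_i\|\Pi_S\ket{\phi_i}\|^2=d\cdot P_{\text{guess}}$ yields $2d\sqrt{P_{\text{guess}}}$.

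One point deserves more care than your write-up gives it: the lemma is stated for \emph{random} $G,H,S,z$, whereas your argument fixes them. For $|P_{\text{left}}-P_{\text{right}}|$ the fix is immediate (take expectations of the per-instance bound and use concavity of the square root, $\E[\sqrt{p_{\text{guess}}}]\le\sqrt{\E[p_{\text{guess}}]}$). For the square-root inequality, however, the per-instance bound $|\sqrt{p^r_{\text{left}}}-\sqrt{p^r_{\text{right}}}|\le 2d\sqrt{p^r_{\text{guess}}}$ does not average directly to $|\sqrt{P_{\text{left}}}-\sqrt{P_{\text{right}}}|\le 2d\sqrt{P_{\text{guess}}}$, since $\sqrt{\E[\cdot]}$ sits outside the expectation on the left. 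You need either to purify the randomness (absorb $G,H,S,z$ into a register of one global pure/mixed state and run the same Euclidean-distance argument once), or to keep the pre-Cauchy--Schwarz bound $|\sqrt{p^r_{\text{left}}}-\sqrt{p^r_{\text{right}}}|\le 2\sum_i\|\Pi_S\ket{\phi_i^r}\|$ and apply the reverse triangle inequality in $L^2$ over the randomness, i.e.\ $\bigl|\sqrt{\E[a_r^2]}-\sqrt{\E[b_r^2]}\bigr|\le\sqrt{\E[(a_r-b_r)^2]}$, before using Cauchy--Schwarz over the $d$ rounds inside the expectation; either route recovers exactly $2d\sqrt{P_{\text{guess}}}$. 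With that repair the proof is complete; it is also worth noting that \cite{C:AmbHamUnr19} itself derives this statement through their semi-classical (punctured) oracle machinery, so your direct hybrid argument is a legitimate, more elementary alternative route to the same bound.
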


\begin{theorem}\label{thm:CCA}
If $\{f_{\dk}\}_{\dk}$ is a quantum-secure pseudorandom function ensemble, then the quantum PKE scheme $\Pi$ is CCA-secure.
\end{theorem}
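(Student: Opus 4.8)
The plan is to prove IND-CCA2 security: the adversary receives $\poly$ copies of $\ket{\qpk}$ through the $\qpkgen(\dk)$ oracle, a pre-challenge decryption oracle $\qdec(\dk,\cdot)$ and a post-challenge oracle $\qdec^*(\dk,\cdot)$, and (since this bit-encryption scheme is not reusable) there is no encryption oracle. As the ciphertexts are single bits the challenge messages are, without loss of generality, $0$ and $1$; I write $H_0^{(b)}$ for the game in which the challenge ciphertext is an honest encryption of $b$, so that the advantage is $\bigl|\Pr[1\gets\adv\text{ in }H_0^{(0)}]-\Pr[1\gets\adv\text{ in }H_0^{(1)}]\bigr|$. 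I would proceed by a hybrid argument in the style of \cref{thm:kem_cca} and \cref{thm:prfs_cca1}.

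First, for each fixed $b$ I would pass to $H_1^{(b)}$, replacing $f_{\dk_0}$ by a truly random function $f_0$, and then to $H_2^{(b)}$, also replacing $f_{\dk_1}$ by a truly random $f_1$. Each step is indistinguishable by the quantum security of the $\PRF$: the reduction keeps the other key itself, prepares $\ket{\qpk_0}$ and $\ket{\qpk_1}$ with one superposition query to its oracle per requested copy (via \cref{thm:quantum}), answers every classical decryption query $(x,y)$ by evaluating $f_0$ and $f_1$ at $x$, and simulates the challenge and $\qdec^*$ in the obvious way. Just as in \cref{thm:correctness}, in $H_2^{(b)}$ the ranges of $f_0$ and $f_1$ overlap on only a negligible fraction of inputs, so the "both keys match" abort branch of $\qdec$ is essentially never triggered.

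The core of the proof is then to show that, with random functions, the game is $b$-independent. Since measuring $\ket{\qpk_b}=\tfrac1{\sqrt{2^\lambda}}\sum_x\ket{x,f_b(x)}$ in the computational basis yields a uniform first register, I would rewrite $H_2^{(b)}$ so the challenger samples $x^*\sample\{0,1\}^\lambda$ up front and outputs the challenge ciphertext $(x^*,y^*)$ with $y^*:=f_b(x^*)$. Next I would pass to $H_{2.5}^{(b)}$, where both public keys handed to the adversary are replaced by their $x^*$-deleted versions $\tfrac1{\sqrt{2^\lambda-1}}\sum_{x\ne x^*}\ket{x,f_b(x)}$ --- negligible trace distance even over $\poly$ copies, exactly as in the transition to hybrid $H_2$ inside the proof of \cref{thm:kem_cca} --- so that $f_0(x^*)$ and $f_1(x^*)$ no longer appear anywhere in the adversary's view other than through $y^*$ and the decryption oracles. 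Finally I would pass to $H_3^{(b)}$, where $y^*\sample\{0,1\}^{3\lambda}$ is uniform and independent and $\qdec^*$ returns $\bot$ on \emph{every} query whose first component is $x^*$. The transition $H_{2.5}^{(b)}\approx H_3^{(b)}$ combines two oracle-reprogramming arguments: reprogramming $f_b$ at the single point $x^*$ (turning $y^*=f_b(x^*)$ into a fresh uniform string), and reprogramming $\qdec^*$ at the single additional point $(x^*,f_{1-b}(x^*))$; each is undetectable unless the adversary queries the reprogrammed point, and since $x^*$ is independent of the adversary's pre-challenge view and $f_{1-b}(x^*)$ is a uniform $3\lambda$-bit string occurring nowhere else, both events have probability at most $\poly\cdot2^{-\lambda}+\poly\cdot2^{-3\lambda}=\negl$ --- formalized via the one-way-to-hiding lemma \cref{lemma:o2h} if $\qdec^*$ is accessed in superposition, and by a plain union bound otherwise. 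In $H_3^{(b)}$ everything the adversary sees or queries --- the public-key copies, $\qdec$, the pair $(x^*,y^*)$, and $\qdec^*$ (now depending on $f_0,f_1$ only away from $x^*$) --- is manifestly independent of $b$, so $\Pr[1\gets\adv\text{ in }H_3^{(0)}]=\Pr[1\gets\adv\text{ in }H_3^{(1)}]$, and chaining the hybrids closes the argument.

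I expect the main obstacle to be the post-challenge decryption oracle of the CCA2 game: once $\qc^*=(x^*,y^*)$ is revealed the adversary knows $x^*$ and can deliberately query $\qdec^*$ on $(x^*,y')$ for $y'\ne y^*$, and such a query returns $1-b$ precisely when $y'=f_{1-b}(x^*)$ --- a $b$-dependent event. The whole reduction therefore rests on arguing that $f_{1-b}(x^*)$ remains uniformly random and information-theoretically hidden throughout (a wrong guess only reveals $f_{1-b}(x^*)\ne y'$), which is exactly what the $x^*$-deletion and reprogramming steps are designed to deliver; the rest is routine bookkeeping.
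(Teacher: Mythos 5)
Your proof is correct, and it reaches the same destination by the same two core ideas as the paper --- swap the PRF for a truly random function, then argue the challenge value at the measured point $x^*$ can be replaced by an independent uniform string because that point never surfaces elsewhere --- but the organization is genuinely different. The paper walks a single chain of hybrids from the $b=0$ game to the $b=1$ game through a middle hybrid whose challenge $(x,y)$ is uniformly random, swapping only one of $f_{\dk_0},f_{\dk_1}$ at a time (and reverting in between), and it invokes the one-way-to-hiding lemma (\cref{lemma:o2h}) precisely because the public-key copies are prepared by superposition queries to the reprogrammed oracle, bounding $P_{\text{guess}}\leq (n+Q)/2^\lambda$. You instead replace both functions at once and collapse the two $b$-games onto a common $b$-independent hybrid, and your extra ``$x^*$-deleted public key'' step (imported from the proof of \cref{thm:kem_cca}) removes the point $x^*$ from the adversary's quantum view up front, so the decoupling of $y^*$ reduces to a classical union bound and O2H becomes optional rather than essential. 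A further difference in your favor: the paper's O2H step simply asserts that post-challenge queries to $\qdec^*$ may not use $x$, which is stronger than what the CCA2 rule forbids (only the exact challenge ciphertext $(x,y)$ is disallowed), whereas you explicitly confront queries $(x^*,y')$ with $y'\neq y^*$ and argue their answers are $\bot$ except when the adversary hits the hidden uniform value $f_{1-b}(x^*)$, an event of probability at most $\poly\cdot 2^{-3\lambda}$; this patches a point the paper glosses over. The trade-off is that your route needs one more hybrid and the trace-distance bookkeeping for polynomially many deleted-key copies, while the paper's route is shorter once one accepts its restriction on post-challenge queries.
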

\begin{proof}
It suffices to show that the CCA experiment with the bit~$b$ fixed to~$0$ is indistinguishable from the same experiment but with~$b$ fixed to~$1$.
To this end we consider a series of hybrids, starting with the former and ending with the latter:
\begin{itemize}
    \item \textbf{Hybrid 0:} This is the original CCA experiment except that the bit $b$ fixed to $0$.
    \item \textbf{Hybrid 1:} In this (inefficient) hybrid, we modify hybrid~0 to instead compute $\ket{\qpk_0}$ as
    \[
    \ket{\qpk_0} = \sum_{x\in \{0,1\}^{\lambda}} \ket{x,f(x)},
    \]
    where $f$ is a truly uniformly random function.
\end{itemize}
The indistinguishability between these two hybrids follows by a standard reduction against the quantum security of~$\PRF$:
To simulate the desired $n$ copies of $\ket{\qpk_0}$, and to answer decryption queries (except the one that contains the challenge ciphertext), the reduction simply queries the oracle provided by the $\PRF$ security experiment (possibly in superposition). Note that whenever the oracle implements $\PRF$, then the view of the distinguisher is identical to hybrid $0$, whereas if the oracle implements a truly random function, then the view of the distinguisher is identical to hybrid $1$.

\begin{itemize}
  \item \textbf{Hybrid 2:} In this (inefficient) hybrid, we modify hybrid~1 such
    that the challenge ciphertext is sampled as
    \[
    x \xleftarrow{\$} \{0,1\}^{\lambda} \quad\text{and}\quad y \xleftarrow{\$} \{0,1\}^{3\lambda}.
    \]
\end{itemize}
The indistinguishability of hybrids~1 and~2 follows from the one-way to hiding
lemma (\cref{lemma:o2h}).
Let $H$ be such that $H(x) = y$ and for all $x'\neq x$ we set $H(x') = f(x')$, and
let $S = \{x\}$.
Let $\adv$ be the adversary playing the security experiment.
We claim that $\adv^f$ is the adversary playing in hybrid~$1$ whereas $\adv^H$
corresponds to the adversary playing hybrid~$2$: Observe that the public keys
can be simulated with oracle access to $f$ ($H$, respectively) by simply
querying on a uniform superposition of the input domain, whereas the decryption
queries can be simulated by query basis states.
Importantly, for all queries after the challenge phase, the adversary is not
allowed to query $x$ to $\qdec^*$.
Hence the set $T$, collected by $\bdv$ is a set of at most $n$ uniform elements
from the domain of $f$, along with $Q$ basis states, where $Q$ denotes the
number of queries made by the adversary to the decryption oracle \emph{before}
the challenge ciphertext is issued.
By a union bound
\[P_{\text{guess}} =\Pr[T \cap \{x\} \neq \emptyset] \leq \frac{(n+Q)}{2^\lambda}=\textsf{negl}(\lambda)\]
since $x$ is uniformly sampled.
Applying~\cref{lemma:o2h}, we deduce that $|P_{\text{left}}-P_{\text{right}}|$
is also negligible, i.e., which bounds the distance between the two hybrids.

\begin{itemize}
    \item \textbf{Hybrid 3:} In this (efficient) hybrid, we modify hybrid~2 to compute $\ket{\qpk_0}$ by using the pseudorandom function~$f_{\dk_0}$ instead of the truly random function~$f$.
    That is, we revert the change done in hybrid~$1$.
\end{itemize}
Indistinguishability follows from the same argument as above.

\begin{itemize}
    \item \textbf{Hybrid 4:} In this (inefficient) hybrid, we modify hybrid~3 to compute $\ket{\qpk_1}$ as
    \[
    \ket{\qpk_1} = \sum_{x\in \{0,1\}^{\lambda}} \ket{x,f(x)}
    \]
    where $f$ is a truly uniformly random function.
\end{itemize}
Indistinguishability follows from the same argument as above.

\begin{itemize}
    \item \textbf{Hybrid 5:} In this (inefficient) hybrid, we modify hybrid~4 by fixing the bit $b$ to $1$ and computing the challenge ciphertext honestly, i.e., as
        \[
    x \xleftarrow{\$} \{0,1\}^{\lambda} \quad\text{and}\quad y =f(x).
    \]
\end{itemize}
Indistinguishability follows from the same argument as above.

\begin{itemize}
    \item \textbf{Hybrid 6:} In this (efficient) hybrid, we modify hybrid~5 to compute $\ket{\qpk_1}$ by using the pseudorandom function~$f_{\dk_1}$ instead of the truly random function~$f$.
    That is, we revert the change done in hybrid~$4$.
\end{itemize}
Indistinguishability follows from the same argument as above. The proof is concluded by observing that the last hybrid is identical to the CCA experiment with the bit~$b$ fixed to $1$.
\end{proof}
 \end{document}